\documentclass[12pt]{article}
\usepackage{amsmath, amsthm, amssymb, amstext}
\usepackage{mathrsfs}
\usepackage{array, amsfonts, mathrsfs}
\usepackage{latexsym, amsmath}
\usepackage{graphicx,color}
\usepackage{epsf}

\newtheorem{Lemma}{Lemma}
\newtheorem{Theorem}{Theorem}
\newtheorem{Convention}{Convention}

\newtheorem{Corollary}{Corollary}

\newtheorem{Definition}{Definition}

\title{Triangular factorization of operators and reconstruction of systems}
\author{M.I.Belishev\thanks {St.Petersburg Department of Steklov Mathematical
        Institute, St.Petersburg, Russia, e-mail: belishev@pdmi.ras.ru}.}

\date{}

\setcounter{section}{-1}

\begin{document}

\maketitle

\begin{abstract}
The paper provides a coherent presentation of an operator scheme,
which is used in an approach to inverse problems of mathematical
physics (the boundary control method). The scheme is based on the
triangular factorization of operators. It not only solves inverse
problems but provides the functional models of a class of
symmetric semi-bounded operators. The class is characterized in
the terms of an evolutionary  dynamical system associated with the
operator.
\end{abstract}

\section{Introduction}\label{Sec Introd}

\subsubsection*{About the paper}

\noindent$\bullet$\,\,\, The purpose of the paper is a coherent
presentation of an operator scheme, which is used in an approach
to inverse problems of mathematical physics - the so-called
boundary control method (BC-method) \cite{B UMN,B IPI}. The scheme
is based on the triangular factorization (TF) of operators and, as
we hope, is of certain independent theoretical interest. It not
only solves inverse problems but provides the functional models of
a class of symmetric semi-bounded operators.

In the light of the system theory, to solve an inverse problem is
to recover a system via its input-output correspondence. However,
the only relevant understanding of `to recover' is to construct a
model system ({\it copy}) that responds to the inputs in the same
way as the original system \cite{KFA}. In particular, this means
to construct a model of the operator which governs the system.
Such a look clarifies the term {\it reconstruction} in the title.
\smallskip

\noindent$\bullet$\,\,\, The main subject of our approach is an
operator construction close to the classical triangular truncation
integral $\int PA\,dP$ and its version $\int dPA\,dP$
\cite{GK,Dav}. The construction appears and is used as follows.
Recall that a {\it nest} is a family of subspaces of a Hilbert
space, which is linearly ordered by the embedding $\subset$\,.
\smallskip

Let ${\mathscr F}$ and ${\mathscr H}$ be the Hilbert spaces,
$W:{\mathscr F}\to{\mathscr H}$ a bounded operator. Let $\mathfrak
f=\{{\mathscr F}_s\}_{s\in [0,T]}\subset{\mathscr F}$ be a
parametrized nest; by $X_s$ we denote the projections in
${\mathscr F}$ onto ${\mathscr F}_s$. The image $\mathfrak
h=W\mathfrak f:=\{{\mathscr H}_s\}_{s\in [0,T]}:\,{\mathscr
H}_s=\overline{W{\mathscr F}_s}$ is a nest in ${\mathscr H}$; let
$P_s$ project in ${\mathscr H}$ onto ${\mathscr H}_s$. The key
object is the operator $D_W:=\int_{[0,T]}dP_sWdX_s: {\mathscr
F}\to{\mathscr H}$, which is said to be a {\it diagonal} of $W$.
Compared to the classics, the difference is that two nests are
used: $\mathfrak f$ and $\mathfrak h$, with the latter being
determined by the former. Such a specifics is motivated by
applications, where the diagonal is the main player in solving the
inverse problems. For instance, it enables to reconstruct a
Riemannian manifold of arbitrary dimension and topology via
boundary inverse data \cite{BKach_1994,B Obzor IP 97,B IPI}.

\subsubsection*{Results}

\noindent$\bullet$\,\,\, Assume that $W:{\mathscr F}\to{\mathscr
H}$ is injective and possesses the diagonal $D_W$ provided
$\overline{{\rm Ran\,}D_{W}}=\overline{{\rm Ran\,}W}$. Let
$D_W^*=\Phi_{D_W^*}|D_W^*|$ be the polar decomposition of $D_W^*$.
We show (Lemma \ref{L V=D^*W triang}) that $W^*W=V^*V$ holds,
where the operator $V=\Phi_{D_W^*}W$ obeys $V{\mathscr
F}_s\subset{\mathscr F}_s,\,\,\,s\in[0,T]$, i.e., is {\it
triangular} with respect to the nest $\mathfrak f$.

As a consequence, taking ${\mathscr H}={\mathscr F}$ and
$W={\sqrt{C}}$, we arrive at the following result on TF (Theorem
\ref{T basic}). Let a positive injective operator $C$ be such that
the diagonal $D_{\sqrt{C}}$ does exist and provides
$\overline{{\rm Ran\,}D_{\sqrt{C}}}={\mathscr F}$. Then the
representation
\begin{equation}\label{Eq 0}
C=V^*V:\quad V\,=\,\Phi_{D^*_{\sqrt{C}}}\,{\sqrt{C}},\quad
V{\mathscr F}_s\subset{\mathscr F}_s,\,\,\, s\in[0,T]
\end{equation}
holds with the triangular factor $V$. Substantially, operator $C$
is not assumed to be boundedly invertible.

Thus, the diagonal recovers a triangular operator $V$ from its
module $|V|={\sqrt{C}}$, whereas the classical construction
determines a Volterra operator $A$ from its anti-Hermitian part by
$A=2i\,\int PA_J\,dP$ \cite{GK,Dav}.
\smallskip

\noindent$\bullet$\,\,\, In section \ref{Sec Models} we provide a
way to construct functional models of a class of symmetric
semi-bounded operators $L_0$ by means of TF. The class is
characterized by the properties of a dynamical system
$\alpha_{L_0}$, evolution of which is governed by $L_0^*$. To be
available for our approach, the system has to obey the finite
speed of propagation principle (FSPP) in a relevant form. System
$\alpha_{L_0}$ possesses an input-state correspondence $W$, which
acts from the outer space of inputs ${\mathscr F}$ to the inner
space of states ${\mathscr H}$. The key assumption is the
existence of the diagonal $D_W$ that is, roughly speaking, a form
of FSPP.

The system possesses the {\it connecting operator} $C=W^*W$, which
relates the Hilbert metrics in ${\mathscr F}$ and ${\mathscr H}$.
Its factorization (\ref{Eq 0}) provides a model (unitary copy) $V$
of the operator $W$, the model acting in the outer space
${\mathscr F}$. Along with $V$, the TF enables to construct in
${\mathscr F}$ a model $\tilde L_0$ of the basic operator $L_0$.
In applications, it is the model $\tilde L_0$ that solves inverse
problems. Its properties turn out to be very close to the ones of
the original $L_0$.

\subsubsection*{Comments}

\noindent$\bullet$\,\,\, The terminology we use is a mixture of
terms and definitions from \cite{GK} and \cite{Dav}. If otherwise
is not specified, all operators are bounded. Everywhere $\mathbb
O$ and $\mathbb I$ are the zero and unit operators. We say an
operator $C$ to be nonnegative or positive (and write $C\geqslant
\mathbb O$ or $C>\mathbb O$) if $(Cf,f)\geqslant 0$ or $(Cf,f)> 0$
holds for all $f\not=0$. The positive definiteness means
$C\geqslant\gamma\,\mathbb I$ with a $\gamma>0$. Everywhere
${\sqrt{C}}\geqslant\mathbb O$ is the square root of
$C\geqslant\mathbb O$.
\smallskip

\noindent$\bullet$\,\,\, We begin with an elementary case of the
matrix factorization, where the notion of the {\it orthogonalizer}
is introduced. Then in the finite dimensional case, we define an
operator diagonal and clarify its relation with the
orthogonalizer. It is a convenient way to prepare the further
infinite dimensional generalization.
\smallskip

\noindent$\bullet$\,\,\, The close relations between inverse
problems and triangular factorization are well known from the
middle of the XX century (M.G.Krein, I.M.Gelfand, B.M.Levitan,
V.A.Marchenko, L.D.Faddeev). The BC method develops these
relations and contributes to the understanding of the operator
theory background of them. Perhaps, the role of TF as a provider
of the operator models was first recognized in \cite{B DSBC IP
2001}.

\section{Finite dimension case}\label{Sec Fin Dim}

\subsubsection*{Matrix factorization}
We begin with an elementary well-known fact.
\smallskip

Let $e_1,\dots, e_n$ and $\langle\cdot,\cdot\rangle$ be the
standard basis and inner product in $\mathbb C^n$. By $O_n$ and
$I_n$ we denote the zero and unit $n\times n$ - matrices.
\begin{Lemma}\label{L Matrix TF}
Let $C=\{c_{ij}\}_{i,j=1}^n>O_n$ be a positive-definite matrix.
Then there is an (upper) triangular matrix
$V=\{v_{ij}\}_{i,j=1}^n:\,\, v_{ij}=0,\,\,i>j$ such that
\begin{equation}\label{Eq matrix TF}
C=V^*V
\end{equation}
holds. Assuming $v_{ii}>0,\,\,i=1,\dots,n$, such a matrix $V$ is
unique.
\end{Lemma}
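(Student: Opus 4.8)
The plan is to obtain existence by Gram--Schmidt orthogonalization with respect to the inner product induced by $C$, and to obtain uniqueness from the observation that a matrix which is simultaneously upper and lower triangular must be diagonal.

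For existence, since $C>O_n$ the form $[x,y]:=\langle Cx,y\rangle$ is a genuine inner product on $\mathbb C^n$. First I would apply the Gram--Schmidt process to the standard basis $e_1,\dots,e_n$ relative to $[\cdot,\cdot]$. Because the $e_j$ are linearly independent, the process never degenerates, and it produces vectors $g_1,\dots,g_n$ with $[g_i,g_j]=\delta_{ij}$ and $g_k\in\operatorname{span}\{e_1,\dots,e_k\}$; the transition matrix $U=\{u_{jk}\}$ determined by $g_k=\sum_{j\leqslant k}u_{jk}e_j$ is therefore upper triangular, and the final normalization step may be fixed so that $u_{kk}>0$. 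Writing the orthonormality relations $g_j^*Cg_i=\delta_{ij}$ in matrix form gives $U^*CU=I_n$, whence $C=(U^{-1})^*U^{-1}$. Setting $V:=U^{-1}$, which is again upper triangular and has positive diagonal $v_{kk}=u_{kk}^{-1}>0$, yields $C=V^*V$ as required.

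For uniqueness, suppose $C=V^*V=\tilde V^*\tilde V$ with $V,\tilde V$ upper triangular and $v_{ii},\tilde v_{ii}>0$. Both factors are invertible, since $\det V=\prod_i v_{ii}>0$ and likewise for $\tilde V$, so I would rewrite the identity $V^*V=\tilde V^*\tilde V$ as $(\tilde V^*)^{-1}V^*=\tilde V V^{-1}=:Q$. The middle expression is a product of lower triangular matrices, hence lower triangular, while the right-hand side is a product of upper triangular matrices, hence upper triangular; thus $Q$ is diagonal. Substituting $\tilde V=QV$ into $\tilde V^*\tilde V=V^*V$ gives $Q^*Q=I_n$, so $|q_{ii}|=1$; comparing the diagonals of $\tilde V=QV$ forces $q_{ii}=\tilde v_{ii}/v_{ii}>0$, whence $q_{ii}=1$, $Q=I_n$, and therefore $\tilde V=V$.

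The only delicate point, and the one I would state with care, is the nondegeneracy of the $C$-inner product: it is precisely what guarantees that Gram--Schmidt does not break down and that each $u_{kk}$ (equivalently each diagonal entry of $V$) is strictly positive. This is exactly where the hypothesis $C>O_n$ enters, and it is what makes the finite-dimensional, boundedly invertible case elementary, in contrast to the more subtle diagonal constructions taken up later in the paper.
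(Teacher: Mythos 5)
Your proposal is correct and follows essentially the same route as the paper: existence via Gram--Schmidt orthogonalization of the standard basis with respect to the $C$-inner product (the paper's orthogonalizer $\Psi$ is your $U$, and $V=\Psi^{-1}$), and uniqueness by showing the transition matrix between two factors is a diagonal unitary, which the positivity of the diagonal entries forces to be the identity. No gaps.
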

\begin{proof}

\noindent{\bf 1.} \,\, Transforming $e_1,\dots, e_n\,\,
\mapsto\,h_1,\dots, h_n$ by the Schmidt procedure with respect to
the positive definite form $\langle Cf,g\rangle$, we construct a
basis
$$
h_1,\dots, h_n:\quad \langle Ch_i,h_j\rangle=\delta_{ij}, \qquad
i,j=1,\dots,n.
$$
The transformation is realized by the matrix  $\Psi: \Psi e_k=
h_k$, which is invertible and triangular: $\psi_{ij}=\langle \Psi
e_i,e_j\rangle=0,\,\,\,i>j$ holds. In the mean time, one has
$$
\delta_{ij}=\langle Ch_i,h_j\rangle=\langle C\Psi e_i,\Psi
e_j\rangle=\langle \Psi^*C\Psi e_i,e_j\rangle,\qquad i,j=1,\dots,n
$$
that implies $\Psi^*C\Psi =I_n$ and leads to
$C=(\Psi^{-1})^*\Psi^{-1}$. Thus, $V:=\Psi^{-1}$ provides (\ref{Eq
matrix TF}).
\smallskip

\noindent{\bf 2.}\,\,Let $C=W^*W$ be one more factorization of the
form (\ref{Eq matrix TF}). By $W^*W=V^*V$ we have $\langle
Wf,Wg\rangle=\langle Vf,Vg\rangle,\,\, f,g\in \mathbb C^n$, and
hence $U=WV^{-1}$ is a unitary matrix. Since $V$, $V^{-1}$ and $W$
are upper-triangular, matrix $U$ is also upper-triangular. Hence,
it has to be diagonal: $u_{kl}=e^{i\alpha_k}\delta_{kl},\,\,{\rm
Im\,}\alpha_k=0$ holds and describes the only nonuniqueness in
determination of the triangular factor. By imposing $v_{ii}>0$,
one eliminates the nonuniqueness and selects the unique factor
$V$.
\end{proof}
The matrix $\Psi$ used in the proof, is called an {\it
orthogonalizer}. In what follows, its operator and continual
counterparts are revealed.

\subsubsection*{Operator factorization}

Here an operator look at the matrix factorization, which is
suitable for generalization to infinite dimension case, is
provided.
\medskip

\noindent$\bullet$\,\,\, Let ${\mathscr F}$ be a finite
dimensional Hilbert space, ${\rm dim\,}{\mathscr F}=n\geqslant
2$;\,\, let $\mathfrak f=\{{\mathscr F}_0,\dots,{\mathscr F}_n\}$
and $\mathfrak p_\mathfrak f=\{X_0,\dots,X_n\}$:
\begin{align*}
\{0\}={\mathscr F}_0\subset{\mathscr F}_1\subset\dots\subset
{\mathscr F}_n={\mathscr F},\qquad \mathbb
O=X_0<X_1<\dots<X_n=\mathbb I
\end{align*}
be the nests of subspaces and the corresponding projections. A
basis $f_1,\dots,f_n$ obeying
\begin{equation*}
(f_i,f_j)=\delta_{ij};\quad {\rm
span\,}\{f_1,\dots,f_k\}={\mathscr F}_k,\quad k=1,\dots,n.
\end{equation*}
is said to be a {\it nest basis}. Since $f_k\in {\mathscr
F}_k\ominus{\mathscr F}_{k-1}$ and ${\rm dim\,}[{\mathscr
F}_k\ominus{\mathscr F}_{k-1}]=1$ holds, any another nest basis is
$e^{i\alpha_1}f_1,\dots,e^{i\alpha_n}f_n$ with real $\alpha_k$.
Also we put $\Delta X_k:=X_k-X_{k-1}$ and note that $\Delta
X_kf_k=f_k$ holds .
\begin{Lemma}\label{L Operator Fact}
Any operator $C>\mathbb O$ in ${\mathscr F}$ admits the
factorization along any nest $\mathfrak f$, i.e., can be
represented as
\begin{equation}\label{Eq operator TF}
C=V^*V;\,\,\quad V{\mathscr F}_k={\mathscr F}_k,\quad
k=0,1,\dots,n.
\end{equation}
\end{Lemma}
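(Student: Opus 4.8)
The plan is to reduce the claim to the matrix factorization already established in Lemma \ref{L Matrix TF} by coordinatizing $\mathscr F$ with a nest basis. First I would fix a nest basis $f_1,\dots,f_n$ subordinate to $\mathfrak f$, so that $(f_i,f_j)=\delta_{ij}$ and $\mathrm{span}\{f_1,\dots,f_k\}=\mathscr F_k$ for every $k$. Such a basis exists: one picks a unit vector $f_k\in\mathscr F_k\ominus\mathscr F_{k-1}$, using that this difference is one-dimensional. The basis gives a unitary $J:\mathbb C^n\to\mathscr F$, $Je_k=f_k$, which carries the standard coordinate flag $\mathrm{span}\{e_1,\dots,e_k\}$ onto $\mathscr F_k$.

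Next I would transport $C$ to $\mathbb C^n$, setting $\tilde C:=J^*CJ$ with entries $c_{ij}=(Cf_j,f_i)$. Because $J$ is unitary and $C>\mathbb O$, for any nonzero $x\in\mathbb C^n$ one gets $\langle\tilde Cx,x\rangle=(CJx,Jx)>0$, so $\tilde C$ is positive definite. Lemma \ref{L Matrix TF} then supplies an upper-triangular matrix $\tilde V$ with $\tilde C=\tilde V^*\tilde V$ (and, if one wishes, positive diagonal). I would then pull $\tilde V$ back to the operator $V:=J\tilde VJ^*$ on $\mathscr F$. Since conjugation by the unitary $J$ respects products and adjoints, $C=J\tilde CJ^*=J\tilde V^*\tilde VJ^*=V^*V$, which is the factorization (\ref{Eq operator TF}). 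For the nest condition, recall that upper-triangularity of $\tilde V$ is exactly the statement that $\tilde V$ leaves the standard coordinate flag invariant; transporting by $J$, the operator $V$ leaves $\mathfrak f$ invariant, i.e.\ $V\mathscr F_k\subseteq\mathscr F_k$ for each $k$.

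The only point deserving care — and the main, if modest, obstacle — is upgrading this inclusion to the equality $V\mathscr F_k=\mathscr F_k$ required by the statement. This rests on invertibility: from $\tilde C=\tilde V^*\tilde V$ and $\det\tilde C>0$ we get $\det\tilde V\neq0$, so $V$ is invertible on $\mathscr F$. Its restriction $V|_{\mathscr F_k}$ is then an injective linear self-map of the finite-dimensional space $\mathscr F_k$, hence surjective, giving $V\mathscr F_k=\mathscr F_k$. (Equivalently, one notes that the inverse of an invertible upper-triangular matrix is again upper-triangular, so $V^{-1}$ too preserves $\mathfrak f$.)

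I expect the substance of the argument to be purely bookkeeping, the genuine content having been absorbed into Lemma \ref{L Matrix TF}; the value of the operator formulation is that it isolates the coordinate-free statement ($C>\mathbb O$ factors through a nest-triangular $V$) that will survive the passage to infinite dimension. Accordingly I would keep the proof short, emphasizing the correspondence ``upper-triangular matrix $\longleftrightarrow$ nest-invariant operator'' induced by $J$, since that is precisely the dictionary the later sections generalize.
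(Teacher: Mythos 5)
Your proposal is correct and follows essentially the same route as the paper: the paper's proof is precisely ``endow $C$ with its matrix in a nest basis, apply Lemma~\ref{L Matrix TF}, and return to operators.'' Your additional care in upgrading the inclusion $V\mathscr F_k\subseteq\mathscr F_k$ to the stated equality via invertibility of $V$ is a detail the paper leaves implicit, and it is handled correctly.
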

\begin{proof}
Endowing the operator $C$ with the matrix in a nest basis,
applying Lemma \ref{L Matrix TF}, and returning to operators, we
establish (\ref{Eq operator TF}).
\end{proof}

To select the canonical factor $V$ by Lemma \ref{L Matrix TF} one
imposes the condition $(Vf_i,f_i)>0,\,\,i=1,\dots,n$.
\smallskip

\noindent$\bullet$\,\,\, Preparing a generalization to the case
${\rm dim\,}{\mathscr F}=\infty$, we modify the matrix approach as
follows.
\smallskip

Let ${\mathscr H}$ be a Hilbert space, ${\rm dim\,}{\mathscr
H}\geqslant n$, and let $W:{\mathscr F}\to{\mathscr H}$ be an
injective operator. It determines the nest $\mathfrak
h=\{{\mathscr H}_1,\dots,{\mathscr H}_n\}\,=:\,W\mathfrak f$ in
${\mathscr H}$,
\begin{align*}
\{0\}={\mathscr H}_0\subset{\mathscr H}_1\subset\dots\subset
{\mathscr H}_n={\rm Ran\,} W\subset{\mathscr H},\quad {\mathscr
H}_k:=W{\mathscr F}_k
\end{align*}
and the corresponding projection nest $\mathfrak p_\mathfrak
h=\{P_1,\dots,P_n\}:\,\,\mathbb O=P_0<P_1<\dots<P_n$, where $P_k$
projects in ${\mathscr H}$ onto ${\mathscr H}_k$. Denote $\Delta
P_k:=P_k-P_{k-1}$. The operator
\begin{equation}\label{Eq def Diag}
D_W\,:=\,\sum\limits_{k=1}^n\Delta P_k\, W\,\Delta
X_k\,:\,\,{\mathscr F}\to{\mathscr H}
\end{equation}
is called a {\it diagonal} of $W$ (w.r.t. the nests $\mathfrak f$
and $\mathfrak h$). The definition easily implies
\begin{equation}\label{Eq PD=DX}
P_k D_W\,:=\,D_W X_k,\qquad k=0,1,\dots,n,
\end{equation}
i.e., the diagonal intertwines the nests $\mathfrak p_\mathfrak f$
and $\mathfrak p_\mathfrak h$.
\smallskip

Let $f_1,\dots,f_n$ be a nest basis of $\mathfrak f$. As is easy
to see, the elements
$$
h_k:=\nu_k\,\Delta P_k\, W\,\Delta X_kf_k;\quad \nu_k:=\|\Delta
P_k\, W\,\Delta X_kf_k\|_{\mathscr H}^{-1},\qquad k=1,\dots, n
$$
form a nest basis of $\mathfrak h$ in ${\rm Ran\,}W={\mathscr
H}_n$, whereas the constants $\nu_k$ do not depend on the choice
of $f_1,\dots,f_n$. A way to construct $h_1,\dots,h_n$ is to apply
the Schmidt orthogonalization to $Wf_1,\dots,Wf_n$ in ${\mathscr
H}$.

The operator $N:=\sum_{k=1}^n\nu_k\,\Delta P_k=N^*$ commutes with
$\mathfrak p_\mathfrak h$:
\begin{equation}\label{Eq NP=PN}
NP_k\,=\,P_kN,\qquad k=0,1,\dots,n,
\end{equation}
and is said to be a {\it weight operator}. It is positive in
${\mathscr H}$, whereas $N\upharpoonright{\mathscr H}_n$ is
positive definite in ${\mathscr H}_n$.

The operator ({\it orthogonalizer})
$$
\Psi :=ND_W\,\,=\sum_{j=1}^n\nu_j\,\Delta
P_j\,\,\sum\limits_{k=1}^n\Delta P_k\,W\,\Delta
X_k=\sum\limits_{k=1}^n\nu_k\,\Delta P_k\,W\,\Delta X_k
$$
relates the nest bases by $\Psi f_k=h_k$ and, as such, is an
isometry from ${\mathscr F}$ onto ${\mathscr H}_n={\rm Ran\,}
W\subset{\mathscr H}$. Therefore
\begin{equation}\label{Eq Isometry}
\Psi^*\Psi\,=\,\mathbb I_{\mathscr F}, \qquad \Psi\Psi^*\,=\,P_n.
\end{equation}
holds. By the latter equality and with regard to $N=N^*$, we have
\begin{equation*}
ND_WD_W^*N\,=\,N|D_W^*|^2N\,=\,P_n,
\end{equation*}
where
$$
D_W^*=\Phi_*|D_W^*|,\quad |D_W^*|:=\sqrt{D_WD_W^*}\,>\, \mathbb
O_{\mathscr H}
$$
is the polar decomposition of the adjoint diagonal. This leads to
the representations
\begin{equation}\label{Eq weight operator}
N\,=\,|D_W^*|^{-1},\qquad \Psi\,=\,|D_W^*|^{-1}D_W\,,
\end{equation}
where the inverse is understood as
$(|D_W^*|\upharpoonright{\mathscr H}_n)^{-1}$ on ${\mathscr H}_n$.

The relations
\begin{equation}\label{Eq Psi*=Phi D*}
\Psi^*=D_W^*N^*=D_W^*N=(\,\Phi_*|D_W^*|\,)\,|D_W^*|^{-1}=\Phi_*
\end{equation}
lead to the equality
\begin{equation}\label{Eq Psi*=Phi D*}
\Psi^*\,=\,\Phi_*:\,\,{\mathscr H}\to{\mathscr F},
\end{equation}
which clarifies the relationship between orthogonalizer and
diagonal. Later on we provide a relevant continual analog of this
relationship.
\smallskip

\noindent$\bullet$\,\,\, Multiplying in (\ref{Eq PD=DX}) by $N$,
with regard to (\ref{Eq NP=PN}) we have
\begin{equation*}
NP_kD_W=P_kND_W=P_k\Psi\overset{(\ref{Eq PD=DX})}=ND_WX_k =\Psi
X_k,\quad k=0,1,\dots,n,
\end{equation*}
so that $P_k\Psi=\Psi X_k$ holds, i.e., the orthogonalizer
provides an isometry, which intertwines the nests.

As is easy to see, the relations
$$
\Psi{\mathscr F}_k={\mathscr H}_k,\quad \Psi^*{\mathscr
H}_k={\mathscr F}_k,\quad k=1,\dots, n
$$
are valid and hence $\Psi^*W\,{\mathscr F}_k={\mathscr F}_k$
holds. As a result, we get
\begin{equation}\label{Eq canon W*W}
(\Psi^*W)^*(\Psi^*W)\,=\,W^*\Psi\,\Psi^*W\overset{(\ref{Eq
Isometry})}=\,W^*P_n W=W^*W,
\end{equation}
and conclude that the operator
\begin{equation}\label{Eq canon W*W}
V\,:=\Psi^*W\,\overset{(\ref{Eq Psi*=Phi D*})}=\,\Phi_*W
\end{equation}
provides triangular factorization of the operator $W^*W$ in
${\mathscr F}$ along the nest $\mathfrak f$. Such a factorization
is referred to as canonical.
\smallskip

\noindent$\bullet$\,\,\, As a consequence, we arrive at the
following look at the matrix TF. Let
$D^*_{\sqrt{C}}=\Phi_*|D^*_{\sqrt{C}}|$ be the polar
decomposition.
\begin{Theorem}\label{T1}
Any operator $C>\mathbb O$ in a finite-dimensional space
${\mathscr F}$ admits a triangular factorization along any given
nest $\mathfrak f$. There is a canonical factorization of the form
\begin{equation}\label{Eq canon C>O}
C\,=\,V^*V,\quad V\,=\,\Phi_*{\sqrt{C}}.
\end{equation}
\end{Theorem}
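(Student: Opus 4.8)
The plan is to read off Theorem \ref{T1} as the special case $\mathscr{H}=\mathscr{F}$, $W=\sqrt{C}$ of the operator construction developed just above. The existence half of the claim — that every $C>\mathbb{O}$ factors as $C=V^*V$ with $V\mathscr{F}_k=\mathscr{F}_k$ along the prescribed nest $\mathfrak{f}$ — is nothing more than Lemma \ref{L Operator Fact}, so the only real content is to exhibit the canonical factor in the closed form \eqref{Eq canon C>O}. First I would set $W:=\sqrt{C}$ acting in $\mathscr{H}=\mathscr{F}$ and check that the hypotheses of the orthogonalizer scheme are met. Since $\dim\mathscr{F}<\infty$, the assumption $C>\mathbb{O}$ forces $C$ to be invertible, whence $\sqrt{C}$ is positive definite and in particular injective; this is exactly the standing assumption under which the diagonal $D_W$, the weight operator $N$, and the orthogonalizer $\Psi=ND_W$ were introduced.

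Next I would form the polar decomposition $D^*_{\sqrt{C}}=\Phi_*|D^*_{\sqrt{C}}|$ of the adjoint diagonal, thereby producing the partial isometry $\Phi_*$ named in the statement, and invoke the two facts already established for the orthogonalizer: that $\Psi^*=\Phi_*$, and that $\Psi$ is an isometry onto $\operatorname{Ran}W$ satisfying $\Psi\Psi^*=P_n$ as in \eqref{Eq Isometry}. The decisive computation is then $(\Psi^*W)^*(\Psi^*W)=W^*\Psi\Psi^*W=W^*P_nW$. Because $W=\sqrt{C}$ is invertible one has $\operatorname{Ran}W=\mathscr{F}$, so $P_n=\mathbb{I}$; and because $\sqrt{C}$ is self-adjoint, $W^*W=\sqrt{C}\,\sqrt{C}=C$. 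Consequently $V:=\Phi_*\sqrt{C}=\Psi^*W$ obeys simultaneously $V^*V=C$ and, by the intertwining property $\Psi^*W\,\mathscr{F}_k=\mathscr{F}_k$, the triangularity $V\mathscr{F}_k=\mathscr{F}_k$ — which is precisely \eqref{Eq canon C>O}.

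In finite dimensions I do not expect a genuine difficulty: the argument collapses to recognizing $\sqrt{C}$ as a legitimate injective $W$ and to using its self-adjointness to identify $W^*W$ with $C$. The one point deserving care is the bookkeeping of the polar decomposition — namely verifying that the $\Phi_*$ extracted from $D^*_{\sqrt{C}}$ is the same partial isometry that equals $\Psi^*$, so that the abstract factor $V=\Psi^*W$ and the stated factor $V=\Phi_*\sqrt{C}$ coincide verbatim rather than merely up to a diagonal unitary. That matching, guaranteed by the identity $\Psi^*=\Phi_*$, is what upgrades the existence statement to the explicit canonical form.
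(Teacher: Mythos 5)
Your proposal is correct and follows exactly the route the paper intends: the paper's own proof is the single line ``putting $\mathscr H=\mathscr F$ and $W=\sqrt{C}$ in the previous considerations,'' and you have simply unpacked that specialization, checking injectivity of $\sqrt{C}$, invoking $\Psi^*=\Phi_*$ and $\Psi\Psi^*=P_n=\mathbb I$, and identifying $W^*W$ with $C$ via self-adjointness. No gaps; your extra care about matching the partial isometry from the polar decomposition with $\Psi^*$ is exactly the point the paper settles in the displayed identity $\Psi^*=\Phi_*$.
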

\begin{proof}
Putting ${\mathscr H}={\mathscr F}$ and $W={\sqrt{C}}$ in the
previous considerations, we arrive at (\ref{Eq canon C>O}).
\end{proof}
The factorization in the form (\ref{Eq canon C>O}) is most
appropriate for meaningful generalizations.

\section{Infinite dimension case}

\subsubsection*{On continuity of nests}

\noindent$\bullet$\,\,\, Let ${\mathscr F}$ be a Hilbert space;
let $\mathfrak f=\{{\mathscr F}_s\}_{s\in[0,T]}$ and $\mathfrak
p_\mathfrak f=\{X_s\}_{s\in[0,T]}$ be a nest and the corresponding
projection nest in ${\mathscr F}$. The parametrization is such
that ${\mathscr F}_s\subset{\mathscr F}_{s'}$ and $X_s\leqslant
X_{s'}$ holds for $s<s'$. If ${\mathscr F}_0=\{0\}$ and ${\mathscr
F}_T={\mathscr F}$ is fulfilled, a nest is said to be bordered. A
nest is {\it continuous} if\,\, $\text{s\,-}\lim\limits_{r\to
s}X_r=X_s$ holds.
\begin{Convention}\label{Conv 1}
In what follows, the nest denoted by $\mathfrak f$ is assumed
continuous and bordered.
\end{Convention}

Let ${\mathscr H}$ be a Hilbert space and let $W:{\mathscr
F}\to{\mathscr H}$ be a bounded operator. The operator determines
the nest
$$
\mathfrak h=W\mathfrak f\,:=\,\{{\mathscr H}_s\}_{s\in[0,T]},\quad
{\mathscr H}_s:=\overline{W{\mathscr F}_s}
$$
and the corresponding projection nest $\mathfrak p_\mathfrak
h=\{P_s\}_{s\in[0,T]}$, in which $P_0=\mathbb O_{\mathscr H}$
holds and $P_T$ projects onto ${\mathscr H}_T=\overline{{\rm
Ran\,}W}$.
\smallskip

\noindent$\bullet$\,\,\, Recall Convention \ref{Conv 1} and
discuss the continuity properties of $\mathfrak h$. Also, recall
that any monotone family of projections does have a limit and
denote $P_{s}^\pm:=\lim\limits_{\sigma\to s\pm
0}P_\sigma$,\,\,\,\,$\Delta P^\pm _{s}:= \mp(P_{s}-P_{s}^\pm)$.
The break of $\mathfrak h$ at a point $s_0\in [0,T]$ means that at
least one of the differences $\Delta P^\pm _{s_0}$ is a nonzero
projection. If $\Delta P^-_{s_0}=0$\,\,\,($\Delta P^+_{s_0}=0$),
we say that the family is continuous from the left (right).
\begin{Lemma}\label{L Continuity}
The nest $\mathfrak h=W\mathfrak f$ is continuous from the left on
$(0,T]$.
\end{Lemma}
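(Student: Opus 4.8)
The plan is to reduce the left continuity of $\mathfrak h$ at a point $s\in(0,T]$ to a single subspace identity, and then to obtain that identity from the left continuity of $\mathfrak f$ (guaranteed by Convention \ref{Conv 1}) together with the boundedness of $W$. First I would recall that $\{P_\sigma\}$ is a monotone nondecreasing family, so its left limit $P_s^-$ exists and projects onto $\overline{\bigcup_{\sigma<s}{\mathscr H}_\sigma}$, the closed span of the increasing family $\{{\mathscr H}_\sigma\}_{\sigma<s}$. Consequently the vanishing of the left break, $\Delta P_s^-=P_s-P_s^-=\mathbb O$, is equivalent to the equality ${\mathscr H}_s=\overline{\bigcup_{\sigma<s}{\mathscr H}_\sigma}$, and this is what I would aim to establish.

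The core of the argument is the elementary fact that a bounded operator commutes with taking the closure of an image: for any set $A\subset{\mathscr F}$ one has $\overline{W\,\overline{A}}=\overline{WA}$, since continuity of $W$ gives $W\overline{A}\subset\overline{WA}$, while the reverse inclusion is automatic. I would apply this with $A=\bigcup_{\sigma<s}{\mathscr F}_\sigma$. Left continuity of $\mathfrak f$ yields $\overline{A}={\mathscr F}_s$, whence ${\mathscr H}_s=\overline{W{\mathscr F}_s}=\overline{W\,\overline{A}}=\overline{WA}=\overline{\bigcup_{\sigma<s}W{\mathscr F}_\sigma}$. Since $\overline{\bigcup_{\sigma<s}W{\mathscr F}_\sigma}=\overline{\bigcup_{\sigma<s}\overline{W{\mathscr F}_\sigma}}=\overline{\bigcup_{\sigma<s}{\mathscr H}_\sigma}$, the required identity ${\mathscr H}_s=\overline{\bigcup_{\sigma<s}{\mathscr H}_\sigma}$ follows, so $P_s^-=P_s$ and the left break vanishes on all of $(0,T]$.

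I do not expect a genuine technical obstacle in the computation itself; the real content lies in understanding the asymmetry that confines the statement to the left. The same scheme fails for right continuity because a bounded operator does not commute with intersections of subspaces: right continuity of $\mathfrak f$ gives ${\mathscr F}_s=\bigcap_{\sigma>s}{\mathscr F}_\sigma$, but one has only the inclusion $\overline{W{\mathscr F}_s}=\overline{W\bigcap_{\sigma>s}{\mathscr F}_\sigma}\subset\bigcap_{\sigma>s}\overline{W{\mathscr F}_\sigma}$, which may be strict. Thus $P_s^+$ can project onto a strictly larger subspace than ${\mathscr H}_s$, so $\mathfrak h$ may genuinely break from the right, which is precisely why Lemma \ref{L Continuity} asserts left continuity only. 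The one point I would state with care is the first reduction, namely the characterization of $P_s^-$ as the projection onto the closed span of $\{{\mathscr H}_\sigma\}_{\sigma<s}$, as everything else then reduces to the closure identity above.
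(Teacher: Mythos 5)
Your proof is correct, and it takes a genuinely different route from the paper's. The paper argues by contradiction: assuming a left jump $\Delta P^-_{s_0}\neq\mathbb O$, it picks a witness $f\in{\mathscr F}_{s_0}$ with $\Delta P^-_{s_0}Wf\neq 0$ and shows that the scalar function $\phi(s)=\|\Delta P^-_{s_0}WX_sf\|$ would have to vanish identically for $s<s_0$ yet be nonzero at $s_0$, while the continuity of $\{X_s\}$ forces $\phi$ to be continuous; it also needs a separate second step to handle the degenerate case where ${\mathscr F}_s\subseteq{\rm Ker\,}W$ for small $s$. You instead argue directly at the level of the subspace lattice: $P_s^-$ projects onto $\overline{\bigcup_{\sigma<s}{\mathscr H}_\sigma}$, and the identity ${\mathscr H}_s=\overline{\bigcup_{\sigma<s}{\mathscr H}_\sigma}$ follows from left continuity of $\mathfrak f$ plus the elementary fact $\overline{W\overline{A}}=\overline{WA}$ for bounded $W$. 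All the steps check out (the characterization of the left limit of an increasing projection family as the projection onto the closed span is standard, and the interchange of closure with the bounded image is exactly as you state). Your version is shorter, avoids both the contradiction and the kernel case distinction, and has the added merit of exposing the structural reason for the one-sidedness --- bounded operators respect closed spans of increasing unions but not intersections --- which the paper only illustrates afterwards with a rank-one counterexample for the right limit. The only thing to be sure to include in a written-up version is a short justification of the characterization of $P_s^-$, which you yourself flag as the point deserving care.
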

\begin{proof}
\noindent{\bf 1.\,\,}Assume that {${\mathscr F}_s\nsubseteq{\rm
Ker\,} W$ holds for all $s\in(0,T]$.}

{Let $s_0\in(0,T]$ and $\Delta P^-_{s_0}\not=\mathbb O_{\mathscr
H}$ hold.} Then, by $P_sP_{s'}=P_{\min\,(s,s')}$ we have
$$
(P_{s_0}-P_{s'})P_s=P_{s}-P_s=\mathbb O_{\mathscr
H},\quad(P_{s_0}-P_{s'})P_{s_0}=P_{s_0}-P_{s'}; \qquad 0<s<s'<s_0,
$$
which implies
\begin{equation}\label{Eq Delta P}
\Delta P^-_{s_0}\,P_s=
\begin{cases}
\mathbb O_{\mathscr H}, \qquad \,\,\, s<s_0;\cr \Delta P^-_{s_0},
\qquad s=s_0
\end{cases}.
\end{equation}

By virtue of $\Delta P^-_{s_0}\not=\mathbb O_{\mathscr H}$ and
${\rm Ran\,} \Delta P^-_{s_0}\subset{\mathscr
H}_{s_0}=\overline{W{\mathscr F}_{s_0}}$, there is an
$f\in{\mathscr F}_{s_0}$ such that $\Delta P^-_{s_0}Wf\not=0$
holds. By (\ref{Eq Delta P}), the function
$$
\phi(s):=\|\Delta P^-_{s_0}WX_s f\|, \qquad s\in[0,T]
$$
vanishes identically for $s<s_0$ and obeys
$$
\phi(s_0)=\|\Delta P^-_{s_0}WX_{s_0} f\|=\|\Delta
P^-_{s_0}P_{s_0}W f\|=\|\Delta P^-_{s_0}W f\|\not=0.
$$
Thus, it has a break at $s=s_0$. In the mean time, the continuity
of the nest $\{X_s\}$ obviously implies the continuity of $\phi$.
This contradiction yields $\Delta P^-_{s_0}=\mathbb O_{\mathscr
H}$, i.e., $P_{s_0}$ is continuous from the left at all
$s_0\in(0,T]$.
\medskip

\noindent{\bf 2.\,\,}Let the assumption { ${\mathscr
F}_{s_0}\nsubseteq{\rm Ker\,} W$ be cancelled}, so that there is a
positive $\sigma:=\sup\,\{s>0\,|\,\,{\mathscr F}_{s}\subseteq{\rm
Ker\,} W\}< T$. Then by ${\mathscr
F}_s\big|_{s<\sigma}\subset{\mathscr F}_{\sigma}$ we have
${\mathscr F}_{s}\subseteq{\rm Ker\,} W$ for all $s\leqslant
\sigma$ and, hence, ${\mathscr H}_s\big|_{s<\sigma}=\{0\}$, i.e.,
$P_s\big|_{s<\sigma}=\mathbb O_{\mathscr H}$. So it makes sense to
treat the continuity of $P_s$ for $s\in[\sigma,T]$ only. This case
is reduced to the previous one by replacing ${\mathscr F}_s$ with
${\mathscr F}_s\ominus{\mathscr F}_{\sigma}$.
\end{proof}

\noindent$\bullet$\,\,\, The following example shows that a
bounded $W$ can break a continuous nest, realizing the case
$\Delta P^+_{s_o}\not=\mathbb O$.

Let $\{X_s\}_{s\in[0,T]}$ be a continuous nest in ${\mathscr F}$.
Take the unit norm elements $\varphi\in{\mathscr F}$,
$\psi\in{\mathscr H}$ and put
$W:=(\,\cdot\,,\varphi)\,\psi:{\mathscr F}\to{\mathscr H}$. Recall
that $P_s$ projects in ${\mathscr H}$ onto ${\mathscr
H}_s=\overline{{\rm Ran\,} WX_s}$. Denote $
\sigma\,:=\,\sup\,\{s\geqslant 0\,|\,\,X_s\varphi=0\}$ and
consider the nontrivial case $0\leqslant\sigma<T$. It is easy to
see that
\begin{equation*}\label{Eq Ran WX}
{\rm Ran\,} WX_s\,=\,\begin{cases}
           \{0\}, & 0\leqslant s\leqslant\sigma;\cr
           {\rm span\,}\{\psi\}, & \sigma<s\leqslant T
              \end{cases}
\end{equation*}
holds and implies
\begin{equation*}\label{Eq Break Ps}
P_s\,=\,\begin{cases} \mathbb O_{\mathscr H}, & 0\leqslant
s\leqslant\sigma;\cr
           (\,\cdot\,,\psi)\,\psi, & \sigma<s\leqslant T
              \end{cases}\,\,\,\,,
\end{equation*}
so that we have $\Delta
P^+_{\sigma}=(\,\cdot\,,\psi)\,\psi\not=\mathbb O_{\mathscr H}$.
Note that the possible case $\sigma=0$ corresponds to the break of
$P_s$ at $s=0$.

\subsubsection*{Diagonal}

\noindent$\bullet$\,\,\, Here we define a continual analog of the
construction (\ref{Eq def Diag}). As above, $\mathfrak
h=W\mathfrak f=\{{\mathscr H}_s\}_{s\in[0,T]},\,\,{\mathscr
H}_s=\overline{W{\mathscr F}_s}$\,\, is the nest in ${\mathscr H}$
determined by a bounded operator $W:{\mathscr F}\to{\mathscr H}$
and $\mathfrak p_\mathfrak h=\{P_s\}_{s\in[0,T]}$ is the
corresponding projection nest.
\smallskip

Let $T<\infty$  and let
$$
\Xi=\{s_k\}_{k=0}^n:\quad 0=s_0<s_1<\dots<s_n=T
$$
be a partition of $[0,T]$ of the range
$r^{\Xi}:=\max\limits_{k=1,\dots,K}(s_k-s_{k-1})$. Denote $\Delta
X_{s_k}:=X_{s_k}-X_{s_{k-1}}$, $\Delta
P_{s_k}:=P_{s_k}-P_{s_{k-1}}$ and put
\begin{equation}\label{Eq Integral Sums}
D^{\Xi}_W\,:=\,\sum\limits_{k=1}^n\Delta P_{s_k}\,W\,\Delta
X_{s_k}\,.
\end{equation}
The operator $D_W: {\mathscr F}\to{\mathscr H}$,
\begin{equation*}
D_W\,:=\,\text{w\,-}\lim\limits_{r^\Xi\to 0}\,D^{\Xi}_W
=\int_{[0,T]}dP_s\,W\,dX_s
 \end{equation*}
is a relevant generalization of (\ref{Eq def Diag}); it is called
a {\it diagonal} of $W$. The limit is understood by Riemann: for
arbitrary $\varepsilon>0$ and elements $f\in{\mathscr
F},\,h\in{\mathscr H}$ there is $\delta>0$ such that
$$
|\,([D_W- D^{\Xi}_W]f,h)\,|\,<\,\varepsilon
$$
holds for any partition $\Xi$ of the range $r^{\Xi}<\delta$.

Note that this definition itself does not require the continuity
of $\mathfrak f$. Also note that such a limit does exist not for
every $W$ and (even continuous) nest $\mathfrak f$: see
\cite{Dav}. \,In what follows we deal with the affirmative case
and say that $W$ possesses diagonal.

Below we present some of the diagonal properties.
\smallskip

\noindent$\bullet$\,\,\, Begin with the following fact of general
character. In the formulation we deal with a pair of arbitrary
nests, not assuming $\mathfrak h=W\mathfrak f$.
\begin{Lemma}\label{L two nests}
Let $B:{\mathscr F}\to{\mathscr H}$ be a bounded operator and let
$\{X_s\}_{s\in[0,T]}$ and $\{P_s\}_{s\in[0,T]}$ be the projection
nests in ${\mathscr F}$ and ${\mathscr H}$ respectively. If the
integral $D[B]:=\int_{[0,T]}dP_sB\,dX_s$ converges then the
relation
\begin{equation}\label{Eq two nests}
\|D[B]\|\,\leqslant\,\|B\|
\end{equation}
holds.  If $B$ is compact and the nest $\{X_s\}_{s\in[0,T]}$ is
continuous then the integral does converge by the
 norm to $\mathbb O$.
\end{Lemma}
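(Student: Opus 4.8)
The plan is to argue throughout at the level of the Riemann sums $D^\Xi[B]:=\sum_{k=1}^n \Delta P_{s_k}\,B\,\Delta X_{s_k}$, establishing for the first assertion a bound that is \emph{uniform} in the partition $\Xi$ and therefore survives passage to the weak limit $D[B]$, and for the second assertion outright norm convergence of $D^\Xi[B]$ to $\mathbb O$.

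For the bound I would fix $f\in\mathscr F$, $h\in\mathscr H$ and, using that each $\Delta P_{s_k}$ is a self-adjoint projection, expand $(D^\Xi[B]f,h)=\sum_k (B\,\Delta X_{s_k}f,\,\Delta P_{s_k}h)$. Applying Cauchy--Schwarz inside each summand and then across the sum gives $|(D^\Xi[B]f,h)|\le \|B\|\big(\sum_k\|\Delta X_{s_k}f\|^2\big)^{1/2}\big(\sum_k\|\Delta P_{s_k}h\|^2\big)^{1/2}$. The decisive point is that the increments $\Delta X_{s_k}$ (resp.\ $\Delta P_{s_k}$) are mutually orthogonal projections summing to $X_T-X_0\le\mathbb I$ (resp.\ $P_T-P_0\le\mathbb I$), so Pythagoras yields $\sum_k\|\Delta X_{s_k}f\|^2\le\|f\|^2$ and likewise for $h$. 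Hence $\|D^\Xi[B]\|\le\|B\|$ for every $\Xi$, and since $D[B]$ is the weak limit of $D^\Xi[B]$ the inequality $\|D[B]\|\le\|B\|$ follows immediately.

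For the convergence statement I would first treat a rank-one $B=(\,\cdot\,,\varphi)\,\psi$. Then $D^\Xi[B]f=\sum_k(\Delta X_{s_k}f,\varphi)\,\Delta P_{s_k}\psi$, and orthogonality of the vectors $\Delta P_{s_k}\psi$ gives $\|D^\Xi[B]f\|^2=\sum_k|(\Delta X_{s_k}f,\varphi)|^2\,\|\Delta P_{s_k}\psi\|^2$. Writing $(\Delta X_{s_k}f,\varphi)=(\Delta X_{s_k}f,\Delta X_{s_k}\varphi)$ and bounding $\|\Delta P_{s_k}\psi\|\le\|\psi\|$, I arrive at $\|D^\Xi[B]\|\le\|\psi\|\,\max_k\|\Delta X_{s_k}\varphi\|$. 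This is exactly where the continuity of the nest $\{X_s\}$ enters: the function $s\mapsto\|X_s\varphi\|^2$ is continuous, hence uniformly continuous on $[0,T]$, and since $\|\Delta X_{s_k}\varphi\|^2=\|X_{s_k}\varphi\|^2-\|X_{s_{k-1}}\varphi\|^2$ the maximal increment tends to $0$ as $r^\Xi\to0$. Thus $D^\Xi[B]\to\mathbb O$ in norm for rank-one $B$, and by linearity and the triangle inequality for every finite-rank $B$.

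The passage to a general compact $B$ is precisely where the uniform bound of the first step pays off: given $\varepsilon>0$ I pick a finite-rank $B_\varepsilon$ with $\|B-B_\varepsilon\|<\varepsilon$ and use $\|D^\Xi[B]-D^\Xi[B_\varepsilon]\|=\|D^\Xi[B-B_\varepsilon]\|\le\|B-B_\varepsilon\|<\varepsilon$ to get $\limsup_{r^\Xi\to0}\|D^\Xi[B]\|\le\varepsilon$, whence norm convergence to $\mathbb O$. The main obstacle I anticipate is the rank-one estimate itself, namely isolating which geometric feature forces $D^\Xi[B]$ to be small; the crux is the recognition that it is the uniform smallness of $\max_k\|\Delta X_{s_k}\varphi\|$, supplied by continuity of $\mathfrak f$, that drives everything, and that no continuity of $\{P_s\}$ is required. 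The remaining finite-rank/density reduction is routine once the partition-uniform estimate $\|D^\Xi[\,\cdot\,]\|\le\|\,\cdot\,\|$ is in hand.
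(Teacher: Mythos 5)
Your proposal is correct and follows essentially the same route as the paper: the partition-uniform bound via orthogonality/Pythagoras of the increments, the rank-one estimate controlled by $\max_k\|\Delta X_{s_k}\varphi\|$ and the uniform continuity of $s\mapsto\|X_s\varphi\|^2$, then linearity for finite rank and the uniform bound to pass to general compact $B$. The only (immaterial) differences are that you prove the norm bound through the bilinear form with a double Cauchy--Schwarz rather than directly on $\|D^\Xi[B]f\|^2$, and in the rank-one step you distribute the ``max'' over the $\varphi$-factors and the ``sum'' over the $f$-factors, where the paper does the reverse.
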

\begin{proof} {\bf 1.\,\,\,}For the integral sums (\ref{Eq Integral Sums}),
by the orthogonality $\Delta P_{s_k}{\mathscr H}\perp\Delta
P_{s_l}{\mathscr H}$ and $\Delta X_{s_k}{\mathscr F}\perp\Delta
X_{s_l}{\mathscr F}$ for $k\not= l$, we have
\begin{align}
\notag & \|\sum\limits_{k=1}^n\Delta P_{s_k}B\,\Delta
X_{s_k}\,f\|^2=\sum\limits_{k=1}^n\|\Delta P_{s_k}B\Delta
X_{s_k}\,f\|^2\leqslant\|B\|^2\sum\limits_{k=1}^n\|\Delta
X_{s_k}\,f\|^2\leqslant\\
\label{Eq estimate general} & \leqslant\|B\|^2\|f\|^2.
\end{align}
Passing to the limit, we get (\ref{Eq two nests}).
\smallskip

\noindent{\bf 2.\,\,\,}A partition $\Xi$ of the segment $[0,T]$
and elements $\varphi\in{\mathscr F}$, $\psi\in{\mathscr H}$
determine the operator $D^\Xi[(\cdot,\varphi)\,\psi]:=
\sum\limits_{k=1}^n\Delta P_{s_k}[(\cdot,\varphi)\,\psi]\Delta
X_{s_k}:\,{\mathscr F}\to{\mathscr H}$, for which we have
\begin{align*}
& D^\Xi[(\cdot,\varphi)\,\psi]f=\\
&=\sum\limits_{k=1}^n\Delta P_{s_k}[(\cdot,\varphi)\,\psi]\Delta
X_{s_k}f=\sum\limits_{k=1}^n(\Delta X_{s_k}f,\varphi)\,\Delta
P_{s_k}\psi=\sum\limits_{k=1}^n(f,\Delta X_{s_k}\varphi)\,\Delta
P_{s_k}\psi.
\end{align*}
Using the orthogonality arguments that has led to (\ref{Eq
estimate general}), we have
\begin{align*}
&\|D^\Xi[(\cdot,\varphi)\,\psi]f\|^2=\\
&=\sum\limits_{k=1}^n|(f,\Delta X_{s_k}\varphi)|^2\|\Delta
P_{s_k}\psi\|^2\leqslant \underset{k=1,\dots,n}\max |(f,\Delta
X_{s_k}\varphi)|^2\sum\limits_{k=1}^n\|\Delta P_{s_k}\psi\|^2\leqslant\\
&\leqslant \|f\|^2\underset{k=1,\dots,n}\max\|\Delta
X_{s_k}\varphi\|^2\|\psi\|^2= \|f\|^2
\delta^{\,\Xi}_\varphi\|\psi\|^2,
\end{align*}
where $ \delta^{\,\Xi}_\varphi:=\underset{k=1,\dots,n}\max
\|\Delta X_{s_k}\varphi\|^2$. By
$X_{s_k}\varphi=X_{s_{k-1}}\varphi\oplus\Delta X_{s_k}\varphi$ we
have
$$
\delta^{\,\Xi}_\varphi\,=\underset{k=1,\dots,n}\max(\|X_{s_k}\varphi\|^2-\|X_{s_{k-1}}\varphi\|^2).
$$
By continuity of $\{X_s\}$, the function $\|X_s\varphi\|^2$ is
uniformly continuous on $s\in[0,T]$, which implies
$\delta^{\,\Xi}_\varphi\to 0$ when $r^{\Xi}\to 0$.

Summarizing, we arrive at the estimate
\begin{equation}\label{Eq delta to 0}
\|D^\Xi[(\cdot,\varphi)\,\psi]\|\leqslant
\left[\delta^{\,\Xi}_\varphi\right]^{1\over
2}\|\psi\|\underset{r^{\Xi}\to 0}\rightarrow 0.
\end{equation}

\noindent{\bf 3.\,\,}Let
$B_p=\sum_{l=1}^p(\cdot,\varphi_l)\,\psi_l$ be a finite rank
operator; then
$$
D^{\Xi}[B_p]=\sum\limits_{k=1}^n\Delta P_{s_k}B_p\,\Delta
X_{s_k}=\sum\limits_{l=1}^pD^\Xi[(\cdot,\varphi_l)\,\psi_l]
$$
holds. By (\ref{Eq delta to 0}) we have
\begin{equation}\label{Eq Estimate 1}
\|D^{\Xi}[B_p]\|
\leqslant\sum\limits_{l=1}^p\delta^{\,\Xi}_{\varphi_l}\,\|\psi_l\|\underset{r^{\Xi}\to
0}\rightarrow 0.
\end{equation}

\noindent{\bf 4.\,\,} Denote
$D^{\Xi}[B]:=\sum\limits_{k=1}^n\Delta P_{s_k}B\,\Delta X_{s_k}$.
Fix an $\varepsilon>0$. Using the fact that $B$ is compact, let us
choose a finite rank approximation $B_p$ to provide
$\|B-B_p\|<{\varepsilon\over 2}$. Then choose a partition $\Xi$ of
the proper range $r^\Xi$ to provide
$\|D^{\Xi}[B_p]\|<{\varepsilon\over 2}$ (see (\ref{Eq Estimate
1})). As a result, we have
\begin{align*}
& \|D^{\Xi}[B]\|=\|D^{\Xi}[B]-D^{\Xi}[B_p]+D^{\Xi}[B_p]\|\leqslant
\|D^{\Xi}[B]-D^{\Xi}[B_p]\|+\|D^{\Xi}[B_p]\|\overset{(\ref{Eq estimate general})}\leqslant\\
&\leqslant\|B-B_p\|+\|D^{\Xi}[B_p]\|<{\varepsilon\over
2}+{\varepsilon\over 2}={\varepsilon}.
\end{align*}
Hence, the integral sums $D^{\Xi}[B]$ converge by the norm to
$D[B]=\mathbb O$.
\end{proof}
\begin{Corollary}\label{C0}
If exists, the diagonal of $W$ obeys $\|D_W\|\leqslant\|W\|$. If
$\mathfrak f$ is continuous and operator $W$ is compact then it
does possess the diagonal $D_W=\mathbb O$.
\end{Corollary}

The idea of the proof of Lemma \ref{L two nests} is taken from
\cite{GK}, Chap. 1, Lemma 5.1, where the equality
$\int_{[0,T]}dX_sB\,dX_s=\mathbb O$ (in our notation) is
established for continuous $\{X_s\}$ and compact $B$. Lemma \ref{L
two nests} shows that the close fact $D_B=\mathbb O$ remains true
in spite of possible discontinuity of $\{P_s\}$ noted in the
example below Lemma \ref{L Continuity}.
\smallskip

\noindent$\bullet$\,\,\, The evident relations $\Delta
X_{s_k}X_s=\mathbb O_{\mathscr F}$ and $P_s\Delta P_{s_k}=\mathbb
O_{\mathscr H}$ for $s<s_k$ easily imply
\begin{equation}\label{Eq PD=DX inf dim}
P_sD_W\,=D_W X_s\,=\,\int_{[0,s]}dP_t\,W\, dX_t\,, \qquad s\in
[0,T].
\end{equation}
Hence, as in the finite-dimensional case (\ref{Eq PD=DX}), the
diagonal intertwins  the nests.
\smallskip

\noindent$\bullet$\,\,\, The following fact is simple but
important.
\begin{Lemma}\label{L Intertwining}
Let $A$ and $B$ be the bounded self-adjoint operators in the
Hilbert spaces ${\mathscr F}$ and ${\mathscr H}$ respectively.
Assume that an operator $D:{\mathscr F}\to{\mathscr H}$ is bounded
and satisfies $\overline{{\rm Ran\,}D}={\mathscr H}$. Let
$D^*=\Phi_* |D^*|:{\mathscr H}\to{\mathscr F}$ be the polar
decomposition. Then the intertwining $DA=BD$ implies
\begin{equation}\label{Eq Intertwin}
A|D|=|D|A,\quad B|D^*|=|D^*|B, \quad  A\Phi_* =\Phi_*B.
\end{equation}
\end{Lemma}
\begin{proof}
\noindent{\bf 1.}\,\,\,By $A=A^*$ and $B=B^*$, the relation
\begin{equation}\label{Eq L1 1}
DA=BD
\end{equation}
implies $AD^*=D^*B$, whereas the latter relation leads to
\begin{equation}\label{Eq L1 2}
AD^*D=D^*BD
\end{equation}
In the mean time, multiplying (\ref{Eq L1 1}) by $D^*$ from the
left, we have
\begin{equation}\label{Eq L1 3}
D^*DA=D^*BD.
\end{equation}
Comparing (\ref{Eq L1 2}) with (\ref{Eq L1 3}), we get
$$
AD^*D=D^*DA,
$$
i.e., $A$ commutes with $D^*D=|D|^2$. Hence, $A$ commutes with
$|D|$ (see, e.g., \cite{BirSol}).
\smallskip

\noindent{\bf 2.}\,\,\, Quite analogously, the relation $DA=BD$
and its consequence $AD^*=D^*B$ lead to $BDD^*=DAD^*$ and
$DAD^*=DD^*B$. Thus, we have $PDD^*=DD^*B$, i.e.,
$B|D^*|^2=|D^*|^2B$. Hence $B$ and $|D^*|$ do commute.
\smallskip

\noindent{\bf 3.}\,\,\,The relation $AD^*=D^*B$, along with the
commutation $B|D^*|=|D^*|B$, leads to
$$
A\Phi_*|D^*|=\Phi_*|D^*|B=\Phi_*B|D^*|,
$$
so that $A\Phi_*|D^*|=\Phi_*B|D^*|$ holds. Therefore,
$A\Phi_*=\Phi_*B$ is valid on
$$
\overline{{\rm Ran\,}|D^*|}={\mathscr H}\ominus{\rm Ker\,}
|D^*|={\mathscr H}\ominus{\rm Ker\,} D^*=\overline{{\rm Ran\,}
D}={\mathscr H}
$$
by the density of ${\rm Ran\,} D$ in ${\mathscr H}$. Hence,
$A\Phi_*=\Phi_*B$ holds on ${\mathscr H}$.
\end{proof}
\begin{Corollary}\label{C1}
Let, in addition to the assumptions of Lemma \ref{L Intertwining},
the relation ${\rm Ker\,}D=\{0\}$ be valid. Then $\Phi$ and
$\Phi_*$ are the unitary operators, so that $A$ and $B$ turn out
to be unitarily equivalent.
\end{Corollary}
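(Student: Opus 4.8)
The plan is to read the two kernel hypotheses directly into the structure of the partial isometries arising in the polar decompositions $D=\Phi|D|$ and $D^*=\Phi_*|D^*|$, recalling that $\Phi_*=\Phi^*$. The standard bookkeeping of the polar decomposition says that $\Phi$ is a partial isometry whose initial space is $\overline{{\rm Ran\,}|D|}=({\rm Ker\,}D)^\perp$ and whose final space is $\overline{{\rm Ran\,}D}$, while $\Phi_*$ has initial space $({\rm Ker\,}D^*)^\perp=\overline{{\rm Ran\,}D}$ and final space $\overline{{\rm Ran\,}D^*}=({\rm Ker\,}D)^\perp$.

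First I would substitute the hypotheses into this description. The standing assumption $\overline{{\rm Ran\,}D}={\mathscr H}$ is precisely ${\rm Ker\,}D^*=\{0\}$, and the added assumption ${\rm Ker\,}D=\{0\}$ gives $({\rm Ker\,}D)^\perp={\mathscr F}$. Hence $\Phi$ becomes an isometry defined on all of ${\mathscr F}$ (initial space ${\mathscr F}$) whose final space is $\overline{{\rm Ran\,}D}={\mathscr H}$; thus $\Phi$ maps ${\mathscr F}$ isometrically onto a dense, hence closed and full, subspace of ${\mathscr H}$, i.e.\ $\Phi$ is unitary from ${\mathscr F}$ onto ${\mathscr H}$. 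Symmetrically, $\Phi_*$ is an isometry defined on all of ${\mathscr H}$ with final space $({\rm Ker\,}D)^\perp={\mathscr F}$, hence unitary from ${\mathscr H}$ onto ${\mathscr F}$. Consistently, $\Phi_*=\Phi^*=\Phi^{-1}$.

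Finally I would invoke the third intertwining relation of Lemma \ref{L Intertwining}, namely $A\Phi_*=\Phi_*B$. Since $\Phi_*$ is now unitary, multiplying on the right by $\Phi_*^{-1}=\Phi_*^*$ yields $A=\Phi_*B\Phi_*^*$, which exhibits $A$ and $B$ as unitarily equivalent. There is no genuine obstacle: the whole content is the passage from ``partial isometry'' to ``unitary'', which is forced the instant both ${\rm Ker\,}D$ and ${\rm Ker\,}D^*$ are trivial. The only step deserving a word of care is the identification of the initial and final spaces of $\Phi$ and $\Phi_*$ with the orthogonal complements of the appropriate kernels, after which the conclusion is immediate.
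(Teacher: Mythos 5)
Your argument is correct and is exactly the reasoning the paper leaves implicit (the corollary is stated without proof as an immediate consequence of Lemma \ref{L Intertwining}): the two kernel conditions make the initial and final spaces of $\Phi$ and $\Phi_*=\Phi^*$ the whole of ${\mathscr F}$ and ${\mathscr H}$ respectively, so both partial isometries are unitary, and the relation $A\Phi_*=\Phi_*B$ then gives $A=\Phi_*B\Phi_*^*$. No gaps; the one step you rightly flag --- that an isometry defined on all of a complete space has closed range, so dense range forces surjectivity --- is the only point needing care.
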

\begin{Corollary}\label{C2}
Let operator $W:{\mathscr F}\to{\mathscr H}$ possess the diagonal
$D_W=\Phi|D_W|$ satisfying ${\rm Ker\,}D_W=\{0\}$ and
$\overline{{\rm Ran\,}D_W}={\mathscr H}$. Then the nests
$\mathfrak f$ and $\mathfrak h=W\mathfrak f$ are unitarily
equivalent, the equivalence being realized by $P_s\Phi=\Phi X_s$
and $\Phi_*P_s=X_s\Phi_*$,\,\,$s\in[0,T]$.
\end{Corollary}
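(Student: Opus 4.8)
The plan is to read Corollary \ref{C2} as a pointwise (in $s$) application of Lemma \ref{L Intertwining} and Corollary \ref{C1}, with the intertwining hypothesis supplied by relation (\ref{Eq PD=DX inf dim}). The essential observation is that (\ref{Eq PD=DX inf dim}) already says $D_W$ intertwines the projection nests, and the polar-decomposition machinery of Lemma \ref{L Intertwining} transfers this intertwining to the partial isometry $\Phi$.

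First I would fix $s\in[0,T]$ and set $A:=X_s$ in ${\mathscr F}$ and $B:=P_s$ in ${\mathscr H}$. These are bounded self-adjoint operators (projections), and $D:=D_W$ is bounded (a diagonal is bounded by Corollary \ref{C0}) with $\overline{{\rm Ran\,}D_W}={\mathscr H}$ by hypothesis. The intertwining $D_WX_s=P_sD_W$ demanded by Lemma \ref{L Intertwining} is precisely (\ref{Eq PD=DX inf dim}) read at this fixed $s$. Applying Lemma \ref{L Intertwining}, its third conclusion $A\Phi_*=\Phi_*B$ becomes $X_s\Phi_*=\Phi_*P_s$, which is exactly the second asserted relation $\Phi_*P_s=X_s\Phi_*$.

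Next I would upgrade this to the companion relation $P_s\Phi=\Phi X_s$. Since ${\rm Ker\,}D_W=\{0\}$ is assumed, Corollary \ref{C1} applies and guarantees that $\Phi$ and $\Phi_*$ are unitary. Recall that the partial isometries arising in the two polar decompositions $D_W=\Phi|D_W|$ and $D_W^*=\Phi_*|D_W^*|$ always satisfy $\Phi_*=\Phi^*$, so here $\Phi_*=\Phi^*=\Phi^{-1}$. Taking the adjoint of $\Phi_*P_s=X_s\Phi_*$ and using $P_s^*=P_s$, $X_s^*=X_s$ gives $P_s\Phi_*^*=\Phi_*^*X_s$, that is $P_s\Phi=\Phi X_s$. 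As $s$ was arbitrary, both relations hold for every $s\in[0,T]$. Finally, these two relations exhibit the unitary equivalence: from $P_s\Phi=\Phi X_s$ and unitarity of $\Phi$ one obtains $\Phi X_s\Phi^*=P_s$, whence $\Phi{\mathscr F}_s={\rm Ran\,}P_s={\mathscr H}_s$, so $\Phi$ carries the nest $\mathfrak f$ onto $\mathfrak h=W\mathfrak f$.

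There is essentially no genuine obstacle here, since all the analytic work has already been done in (\ref{Eq PD=DX inf dim}), Lemma \ref{L Intertwining}, and Corollary \ref{C1}; the corollary is a direct specialization. The single point deserving care is the identification $\Phi_*=\Phi^*$ relating the two polar decompositions, as this is what permits the adjoint step to move the conclusion from $\Phi_*$ to $\Phi$ and thereby recover both intertwining relations stated in the corollary.
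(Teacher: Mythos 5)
Your proposal is correct and takes essentially the same route as the paper, whose entire proof is the one-line remark that the claim follows from the intertwining (\ref{Eq PD=DX inf dim}) and Lemma \ref{L Intertwining} (with Corollary \ref{C1} supplying unitarity). You have simply filled in the details the paper leaves implicit, including the standard identification $\Phi_*=\Phi^*$ between the partial isometries of the polar decompositions of $D_W$ and $D_W^*$.
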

\noindent The latter follows from the intertwining (\ref{Eq PD=DX
inf dim}) and Lemma \ref{L Intertwining}.
\smallskip

In applications \cite{BKach_1994,B Obzor IP 97,B UMN}, the
following case is encountered and is of interest:  ${\rm
Ker\,}D_W\not=\{0\}$ and $\overline{{\rm Ran\,}D_W}={\mathscr H}$
holds, whereas $D_W^*$ is an isometry from ${\mathscr H}$ into
${\mathscr F}$. In such a case, we have
\begin{equation*}
D^*_WD_W\,=\,X', \qquad D_WD_W^*\,=\,\mathbb I_{\mathscr H},
\end{equation*}
where $X'$ projects in ${\mathscr F}$ onto ${\mathscr F}\ominus
{\rm Ker\,}D_W$.

\subsubsection*{Triangularity}

\noindent$\bullet$\,\,\, Recall that a bounded operator
$V:{\mathscr F}\to{\mathscr F}$ \,is {\it triangular} with respect
to the nest $\mathfrak f$ (we write $V\in\mathfrak T_\mathfrak f$)
if\, $V{\mathscr F}_s\subset{\mathscr F}_s$ holds. The latter is
equivalent to $ VX_s=X_s VX_s,\,\,\, s\in[0,T]$.

The class $\mathfrak T_\mathfrak f$ is a subalgebra of the bounded
operator algebra $\mathfrak B({\mathscr F})$ \cite{Dav}. Here it
is worth noting again that not every operator $V\in \mathfrak
T_\mathfrak f$ does possess the diagonal. In other words, the
transformation $V\mapsto D_V$ is well defined not on the whole
$\mathfrak T_\mathfrak f$ \cite{Dav}.
\begin{Lemma}\label{L V=D^*W triang} Let an operator $W:{\mathscr
F}\to{\mathscr H}$ possess the diagonal $D_W$ with respect to a
continuous nest $\mathfrak f=\{{\mathscr F}_s\}_{s\in[0,T]}$ and
$\overline{{\rm Ran\,} D_W}=\overline{{\rm Ran\,} W}$ holds. Let
$D_W^*=\Phi_*|D_W^*|$ be the polar decomposition of $D_W^*$. Then
the operator
\begin{equation}\label{Eq canon W*W continual}
V\,:=\,\Phi_*W:\,\,{\mathscr F}\to{\mathscr F}
\end{equation}
belongs to the class $\mathfrak T_\mathfrak f$ and provides
\begin{equation}\label{Eq W*W=V*V}
V^*V\,=\,W^*W.
\end{equation}
\end{Lemma}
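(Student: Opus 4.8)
The statement splits into two essentially independent claims: the algebraic identity (\ref{Eq W*W=V*V}) and the triangularity $V\in\mathfrak T_\mathfrak f$. The plan is to settle the identity first, since it uses only the structure of the polar decomposition, and then to obtain triangularity by feeding the intertwining relation (\ref{Eq PD=DX inf dim}) into Lemma \ref{L Intertwining}.

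For $V^*V=W^*W$, I would write $V=\Phi_*W$, so that $V^*V=W^*\Phi_*^*\Phi_*W$ and everything reduces to identifying $\Phi_*^*\Phi_*$. Since $\Phi_*$ is the partial isometry of the polar decomposition $D_W^*=\Phi_*|D_W^*|$, its initial space is $\overline{{\rm Ran\,}|D_W^*|}=({\rm Ker\,}D_W^*)^\perp=\overline{{\rm Ran\,}D_W}$, so $\Phi_*^*\Phi_*$ is the projection onto $\overline{{\rm Ran\,}D_W}$. The hypothesis $\overline{{\rm Ran\,}D_W}=\overline{{\rm Ran\,}W}={\mathscr H}_T$ then identifies this projection with $P_T$. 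Because ${\rm Ran\,}W\subset{\mathscr H}_T={\rm Ran\,}P_T$ forces $P_TW=W$, I would conclude $V^*V=W^*P_TW=W^*W$, which is (\ref{Eq W*W=V*V}).

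For triangularity I would invoke Lemma \ref{L Intertwining} with $D=D_W$ and the self-adjoint pair $A=X_s$, $B=P_s$. The diagonal intertwines the nests, $D_WX_s=P_sD_W$ by (\ref{Eq PD=DX inf dim}), which is precisely the hypothesis $DA=BD$ of that lemma; its third conclusion then yields $X_s\Phi_*=\Phi_*P_s$ for every $s\in[0,T]$. Since triangularity is the relation $X_sVX_s=VX_s$, I would read it off as
\[
X_sVX_s=X_s\Phi_*WX_s=\Phi_*P_sWX_s=\Phi_*WX_s=VX_s,
\]
where the second equality uses $X_s\Phi_*=\Phi_*P_s$ and the third uses $P_sWX_s=WX_s$, valid because ${\rm Ran\,}(WX_s)\subset{\mathscr H}_s={\rm Ran\,}P_s$.

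The one point demanding care, and the place I expect the only real friction, is the application of Lemma \ref{L Intertwining}, whose hypothesis asks that $\overline{{\rm Ran\,}D}$ be the \emph{whole} target space, whereas here $\overline{{\rm Ran\,}D_W}={\mathscr H}_T$ may be a proper subspace of ${\mathscr H}$. The natural fix is to run the lemma with ${\mathscr H}$ replaced by ${\mathscr H}_T$, after checking that the restrictions of $P_s$ and $\Phi_*$ to ${\mathscr H}_T$ are consistent with the polar decomposition: indeed $P_s$ maps ${\mathscr H}_T$ into ${\mathscr H}_s\subset{\mathscr H}_T$ and annihilates ${\mathscr H}\ominus{\mathscr H}_T$, while $\Phi_*$ already vanishes on ${\rm Ker\,}D_W^*={\mathscr H}\ominus{\mathscr H}_T$. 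With this bookkeeping the conclusion $X_s\Phi_*=\Phi_*P_s$ holds on all of ${\mathscr H}$, since both sides also vanish on the orthogonal complement of ${\mathscr H}_T$, and the computation above goes through unchanged.
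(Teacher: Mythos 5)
Your proposal is correct and follows essentially the same route as the paper: the identity $V^*V=W^*W$ via $\Phi_*^*\Phi_*=P_T$ and $P_TW=W$, and triangularity via Lemma \ref{L Intertwining} applied to the intertwining (\ref{Eq PD=DX inf dim}). Your closing remark about $\overline{{\rm Ran\,}D_W}$ possibly being a proper subspace of ${\mathscr H}$ addresses a point the paper passes over silently, and your fix (both sides of $X_s\Phi_*=\Phi_*P_s$ vanish on ${\mathscr H}\ominus{\mathscr H}_T$ since $P_s=P_sP_T$) is the right one.
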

\begin{proof}
The intertwining  (\ref{Eq PD=DX inf dim}) yields
$$
X_sD_W^*\,=\,D_W^*P_s,\qquad s\in [0,T]
$$
and by virtue of (\ref{Eq Intertwin}) leads to
$X_s\Phi_*=\Phi_*P_s$, which provides
\begin{align*}
V{\mathscr F}_s=\Phi_*W{\mathscr F}_s=\Phi_*P_sW{\mathscr
F}_s=X_s\Phi_*W{\mathscr F}_s\subset{\mathscr F}_s,\qquad s\in
[0,T].
\end{align*}
Hence, $V\in\mathfrak T_\mathfrak f$ holds. Next, we have
\begin{align*}
\overline{{\rm Ran\,} |D_W^*|}={\mathscr H}\ominus{\rm
Ker\,}|D_W^*|={\mathscr H}\ominus{\rm Ker\,} D_W^*=\overline{{\rm
Ran\,} D_W}=\overline{{\rm Ran\,} W},
\end{align*}
the latter equality being valid by assumption of the Lemma. Hence,
$\Phi_*$ maps $\overline{{\rm Ran\,} D_W}=\overline{{\rm Ran\,}
W}$ to ${\mathscr F}$ isometrically. Therefore,
$\Phi_*^*\Phi_*=P_T$ projects in ${\mathscr H}$ onto
$\overline{{\rm Ran\,} W}$. As a consequence, we get
\begin{align*}
V^*V=(\Phi_* W)^*\Phi_* W=W^*\Phi_*^*\Phi_*W=W^*P_TW=W^*W.
\end{align*}
\end{proof}

If $\overline{{\rm Ran\,} D_W}=\overline{{\rm Ran\,} W}$ does not
hold, the diagonal may not provide factorization. Indeed, if $W$
is compact then the equality $D_W=\mathbb O$ makes (\ref{Eq
W*W=V*V}) impossible.
\smallskip

Formula (\ref{Eq canon W*W continual}) provides a continual
version of (\ref{Eq canon W*W}) and relevant analog of the
orthogonalizer: as well as in (\ref{Eq Psi*=Phi D*}), the latter
is $\Psi=\Phi_*^*$. Definition (\ref{Eq weight operator}) gives
reason to regard $N=|D_{W^*}|^{-1}$ as a continual weight
operator.

\subsubsection*{Factorization}

\noindent$\bullet$\,\,\, Recall that an operator $C$ in ${\mathscr
F}$ admits the triangular factorization along a nest $\mathfrak f$
if the representation $C=V^*V$ holds with $V\in\mathfrak
T_\mathfrak f$. If exists, such a factorization is not unique: if
$I\in\mathfrak T_\mathfrak f$ is an isometry (i.e., $I^*I=\mathbb
I$ holds) then one has $(IV)^*IV=V^*I^*IV=V^*V=C$, so that
$V'=IV\in\mathfrak T_\mathfrak f$ is valid. Therefore, a question
on a canonical factorization arises. The answer is the following.
\begin{Theorem}\label{T basic}
Let $\mathfrak f$ be a continuous nest, $C>\mathbb O$, and let
${\sqrt{C}}$ possess the diagonal
\begin{equation*}
D_{\sqrt{C}}\,=\,\int_{[0,T]}d\tilde P_s\,{\sqrt{C}}\,dX_s,
\end{equation*}
where $\tilde P_s$ projects in ${\mathscr F}$ onto
$\overline{{\sqrt{C}}{\mathscr F}_s}$. Assume that $\overline{{\rm
Ran\,}D_{\sqrt{C}}}={\mathscr F}$ holds. Let
$D_{\sqrt{C}}^*=\Phi_*|D_{\sqrt{C}}^*|$ be the polar
decomposition.  Then $C$ admits the (canonical) factorization
\begin{align}\label{Eq Th basic 2}
C=V^*V,\qquad V\,=\,\Phi_*\,{\sqrt{C}}\,\in\mathfrak T_\mathfrak
f.
\end{align}
\end{Theorem}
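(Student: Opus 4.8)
The plan is to obtain Theorem~\ref{T basic} as a direct specialization of Lemma~\ref{L V=D^*W triang}, taking $\mathscr H=\mathscr F$ and $W=\sqrt{C}$. Under this identification the image nest $\mathfrak h=W\mathfrak f$ has subspaces $\mathscr H_s=\overline{\sqrt{C}\,\mathscr F_s}$, so the projections $P_s$ of the Lemma coincide with the operators $\tilde P_s$ appearing in the statement, and the integral $D_W=\int_{[0,T]}dP_s\,W\,dX_s$ is exactly $D_{\sqrt{C}}$. Thus all objects line up, and the task reduces to checking that the three hypotheses of the Lemma are met.

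First I would note that the continuity of $\mathfrak f$ and the existence of the diagonal $D_{\sqrt{C}}$ are assumed outright. The one hypothesis that is not verbatim is the range condition $\overline{\mathrm{Ran}\,D_W}=\overline{\mathrm{Ran}\,W}$. We are given $\overline{\mathrm{Ran}\,D_{\sqrt{C}}}=\mathscr F$, so it suffices to show $\overline{\mathrm{Ran}\,\sqrt{C}}=\mathscr F$. This is where positivity enters: $C>\mathbb O$ forces $\mathrm{Ker}\,C=\{0\}$, and since $\mathrm{Ker}\,\sqrt{C}=\mathrm{Ker}\,C$ (if $Cf=0$ then $\|\sqrt{C}f\|^2=(Cf,f)=0$, and the converse is immediate) the self-adjoint factor $\sqrt{C}$ is injective as well; hence $\overline{\mathrm{Ran}\,\sqrt{C}}=\mathscr F\ominus\mathrm{Ker}\,\sqrt{C}=\mathscr F$. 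Consequently $\overline{\mathrm{Ran}\,D_{\sqrt{C}}}=\mathscr F=\overline{\mathrm{Ran}\,\sqrt{C}}$, and the range hypothesis holds.

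With the hypotheses in place, Lemma~\ref{L V=D^*W triang} yields directly that $V:=\Phi_*\sqrt{C}$ lies in $\mathfrak T_\mathfrak f$ and satisfies $V^*V=(\sqrt{C})^*\sqrt{C}$. To finish I would invoke self-adjointness and the defining property of the square root, namely $(\sqrt{C})^*=\sqrt{C}$ and $(\sqrt{C})^2=C$, whence $V^*V=\sqrt{C}\,\sqrt{C}=C$. This is precisely the asserted factorization~(\ref{Eq Th basic 2}).

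Since the whole argument is a substitution into an already-proved Lemma, there is no genuine obstacle; the only point requiring care is the range identity, and even that is forced by the strict positivity of $C$ through the equality $\mathrm{Ker}\,\sqrt{C}=\mathrm{Ker}\,C$. I would also remark that the canonicity claimed in the statement is inherited from the canonical construction underlying the Lemma, in which the triangular factor is built from the polar decomposition of the adjoint diagonal, so no additional work is needed on that point.
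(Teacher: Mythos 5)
Your proposal is correct and follows essentially the same route as the paper: both proofs specialize Lemma~\ref{L V=D^*W triang} to $\mathscr H=\mathscr F$, $W=\sqrt{C}$, verify the range hypothesis via the positivity (hence injectivity and dense range) of $\sqrt{C}$, and conclude $V^*V=\sqrt{C}\,\sqrt{C}=C$. Your treatment of the kernel identity $\mathrm{Ker}\,\sqrt{C}=\mathrm{Ker}\,C$ merely spells out what the paper states in one line.
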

\begin{proof}
By positivity of $C$ and ${\sqrt{C}}$ we have $\overline{{\rm
Ran\,}{\sqrt{C}}}={\mathscr F}$. Hence, $\overline{{\rm
Ran\,}{\sqrt{C}}}=\overline{{\rm Ran\,}D_{\sqrt{C}}}$ holds and we
can apply Lemma \ref{L V=D^*W triang} for $W={\sqrt{C}}$. As well
as in the Lemma, the condition $\overline{{\rm
Ran\,}D_{\sqrt{C}}}={\mathscr F}$ implies $\Phi_*^*\Phi_*=\mathbb
I_{\mathscr F}$. Applying Lemma \ref{L V=D^*W triang} and checking
$$
V^*V=[\Phi_*\,{\sqrt{C}}]^*\,[\Phi_*\,{\sqrt{C}}]={\sqrt{C}}\,[\Phi_*^*\,\Phi_*]\,{\sqrt{C}}={\sqrt{C}}\,{\sqrt{C}}=C,
$$
we arrive at the statement of the Theorem.
\end{proof}
\begin{Corollary}\label{C3}
Let $C=\mathbb I+B>0$ in ${\mathscr F}$, and let $B$ be a compact
operator. Then $C$ admits the canonical factorization of the form
(\ref{Eq Th basic 2}) along any continuous nest $\mathfrak f$.
\end{Corollary}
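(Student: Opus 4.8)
The plan is to deduce the statement from Theorem \ref{T basic} applied to the positive operator $C=\mathbb I+B$; concretely, I must verify its two hypotheses for $W=\sqrt C$: that the diagonal $D_{\sqrt C}$ exists and that $\overline{{\rm Ran\,}D_{\sqrt C}}=\mathscr F$. First I would record the structural facts about $\sqrt C$. Writing $\sqrt C-\mathbb I=(\sqrt C-\mathbb I)(\sqrt C+\mathbb I)(\sqrt C+\mathbb I)^{-1}=B(\sqrt C+\mathbb I)^{-1}$ and noting that $(\sqrt C+\mathbb I)^{-1}$ is bounded (since $\sqrt C\geqslant\mathbb O$), I conclude $\sqrt C=\mathbb I+K$ with $K:=B(\sqrt C+\mathbb I)^{-1}$ compact. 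Moreover $\sigma(C)=1+\sigma(B)$ accumulates only at $1$ and, by $C>\mathbb O$, avoids $0$; hence $\inf\sigma(C)>0$, so $C$ and $\sqrt C$ are boundedly invertible. In particular $\overline{{\rm Ran\,}\sqrt C}=\mathscr F$, which is exactly the equality $\overline{{\rm Ran\,}D_{\sqrt C}}=\overline{{\rm Ran\,}\sqrt C}$ required to invoke Lemma \ref{L V=D^*W triang}, once denseness of the diagonal's range is established.

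For existence of the diagonal I would work on the integral sums. With $\tilde P_s$ the projection onto $\overline{\sqrt C\,\mathscr F_s}$, substituting $\sqrt C=\mathbb I+K$ splits
\[
D^\Xi_{\sqrt C}=\sum_k\Delta\tilde P_{s_k}\,\Delta X_{s_k}\;+\;\sum_k\Delta\tilde P_{s_k}\,K\,\Delta X_{s_k}=:Q^\Xi+D^\Xi[K].
\]
Since $\mathfrak f$ (hence $\{X_s\}$) is continuous and $K$ is compact, Lemma \ref{L two nests} gives $D^\Xi[K]\to\mathbb O$ in norm. Thus $D_{\sqrt C}$ exists precisely when the \emph{identity part} $Q^\Xi=\sum_k\Delta\tilde P_{s_k}\Delta X_{s_k}$ converges, and then $D_{\sqrt C}=Q:=\lim Q^\Xi$.

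It remains to analyse $Q$. Here I would first show that $G_s:=\tilde P_s-X_s$ is compact for every $s$: the oblique idempotent $\sqrt C\,X_s\,\sqrt C^{-1}$ onto $\sqrt C\mathscr F_s$ obeys $\sqrt C\,X_s\,\sqrt C^{-1}-X_s=[K,X_s]\sqrt C^{-1}$, a compact operator, and passing to the orthogonal projection $\tilde P_s$ preserves compactness of the difference with $X_s$. Comparing $Q^\Xi$ with $\sum_k\Delta X_{s_k}\Delta X_{s_k}=\mathbb I$ then yields $Q^\Xi-\mathbb I=\sum_k\Delta G_{s_k}\Delta X_{s_k}$, an integral driven by the compact, strongly continuous family $G_s$; a Gohberg--Krein--type estimate shows that this converges and that $Q-\mathbb I$ is compact. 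Hence $Q=\mathbb I+(\text{compact})$ is Fredholm of index zero, and combining injectivity of $\sqrt C$ with the intertwining $\tilde P_sQ=QX_s$ (cf. (\ref{Eq PD=DX inf dim})) forces ${\rm Ker\,}Q=\{0\}$, so $Q$ is invertible and $\overline{{\rm Ran\,}D_{\sqrt C}}={\rm Ran\,}Q=\mathscr F$. With both hypotheses verified, Theorem \ref{T basic} delivers the canonical factorization $C=V^*V$, $V=\Phi_*\sqrt C\in\mathfrak T_\mathfrak f$.

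The main obstacle is this last paragraph: establishing convergence of the identity-part sum $Q^\Xi$ and invertibility of its limit. Unlike the compact case settled by Lemma \ref{L two nests}, the nests $\{X_s\}$ and $\{\tilde P_s\}$ genuinely differ (so $Q\neq\mathbb I$ in general), and the increments $\Delta G_{s_k}$ are not mutually orthogonal, so the Pythagorean estimate underlying the proof of Lemma \ref{L two nests} does not apply verbatim. This is exactly where the classical Gohberg--Krein factorization theory for operators of the form $\mathbb I+\text{compact}$ enters, and it carries the essential weight beyond the abstract Theorem \ref{T basic}.
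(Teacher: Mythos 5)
Your proposal follows essentially the same route as the paper: write $\sqrt C=\mathbb I+K$ with $K$ compact, invoke Lemma \ref{L two nests} to annihilate the diagonal of the compact part, and thereby reduce everything to the ``identity part'' $\int_{[0,T]}d\tilde P_s\,dX_s$. Where you differ is in how the residual step is discharged. The paper is laconic at exactly the point you flag: it simply asserts that $D_{\sqrt C}=\int_{[0,T]}d\tilde P_s\,dX_s$ holds and that $V=\Phi_*\sqrt C$ provides the factorization, and then remarks that, since positivity of $\mathbb I+B$ is equivalent to its positive definiteness, the factorization problem for such operators is already settled by the classical truncation integral --- i.e.\ the convergence and dense-range questions are outsourced to \cite{GK,Dav} rather than argued. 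Your extra structural observations (bounded invertibility of $\sqrt C$, compactness of $\tilde P_s-X_s$, and that $Q=\mathbb I+(\text{compact})$ would be Fredholm of index zero) are sound and go beyond what the paper writes down. However, two of your steps remain unestablished. First, the convergence of $\sum_k\Delta G_{s_k}\Delta X_{s_k}$: since this sum \emph{equals} $Q^{\Xi}-\mathbb I$, its convergence is exactly equivalent to the convergence of $D^{\Xi}_{\sqrt C}$ itself, so the rewriting does not reduce the difficulty, and the appeal to a ``Gohberg--Krein--type estimate'' is precisely the citation the paper makes. Second, the claim that injectivity of $\sqrt C$ together with the intertwining $\tilde P_sQ=QX_s$ forces ${\rm Ker\,}Q=\{0\}$ is not justified: the intertwining only shows that ${\rm Ker\,}Q$ is invariant under all $X_s$, and the paper itself notes (after Corollary \ref{C2}) that diagonals of injective operators may well have nontrivial kernels, so a genuinely different argument is needed there. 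In short: same decomposition, same key lemma, and the same essential gap, which the paper closes by reference to the classical theory; your Fredholm detour is a reasonable attempt to make the argument self-contained but does not close it either.
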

Indeed, in this case we have ${\sqrt{C}}=\mathbb I+B'$ with a
compact $B'$, whereas Lemma \ref{L two nests} provides
$D_{B'}=\mathbb O_{\mathscr F}$. As a result,
$$
D_{\sqrt{C}}=\int_{[0,T]}d\tilde P_s\,dX_s
$$
holds and $V=\Phi_*{\sqrt{C}}$ does provide the factorization.
However, the positivity of $\mathbb I+B$ is equivalent to its
positive definiteness. Therefore, the factorization problem for
such an operator is solved by the use of classical truncation
integral \cite{GK,Dav}. In this case, our novelty is just a
representation of the triangular factor in the form (\ref{Eq Th
basic 2}).
\smallskip

\noindent$\bullet$\,\,\, If operator $C$ is compact, we have
$D_{\sqrt{C}}=\mathbb O$ and the factorization (\ref{Eq Th basic
2}) fails. Nevertheless, in some cases it is possible to preserve
it by modifying the diagonal. In \cite{Bel_BC&WFC}, in course of
solving a concrete inverse problem, at heuristic level, the
construction of the form
$$
D'_{\sqrt{C}}=\int_{[0,T]}d\tilde P_s{\sqrt{C}}\,\partial\,dX_s
$$
is proposed with a proper operator $\partial$, which makes
${\sqrt{C}}\partial$ a bounded but not compact operator. Such a
correction provides
\begin{align*}
C=V^*V,\qquad V\,=\,\Phi'_*\,{\sqrt{C}}\,\in\mathfrak T_\mathfrak
f,
\end{align*}
where ${D'_{\sqrt{C}}}^*=\Phi'_*|{D'_{\sqrt{C}}}^*|$.

As illustration, this trick with $\partial=\frac{d}{dt}$ enables
to factorize
$$
C:=\int_0^1\min\{t,\tau\}\,(\,\cdot\,)(\tau)\,d\tau=V^*V, \quad
V=\int_t^1(\,\cdot\,)(\tau)\,d\tau
$$
in ${\mathscr F}=L_2[0,1]$ along the Volterra nest ${\mathscr
F}_s=\{f\in{\mathscr F}\,|\,\,{\rm supp\,}f\subset[0,s]\}$.

\section{Factorization and models of operators}\label{Sec Models}
Here we show that triangular factorization is a source of
functional models of the operators and dynamical systems.

\subsubsection*{Operators}

\noindent$\bullet$\,\,\,The class of the operators that we deal
with, is the following. We assume that $L_0$ is a closed densely
defined symmetric positive definite operator in a Hilbert space
${\mathscr H}$ with nonzero defect indexes; so that
$$
\overline{{\rm Dom\,} L_0}={\mathscr H};\quad
L_0\subset{L_0^*};\quad L_0\geqslant\gamma\,\mathbb
I,\,\,\,\,\gamma>0;\quad 1\leqslant
n_+^{L_0}=n_-^{L_0}\leqslant\infty
$$
holds. By virtue of $n_\pm^{L_0}\not=0$ such an operator is
necessarily unbounded. We denote ${\mathscr K}:={\rm Ker\,}
{L_0^*}$ and use the projection $P$ in ${\mathscr H}$ onto
${\mathscr K}$. Note that ${\rm dim\,}{\mathscr K} =n_\pm^{L_0}$
holds.

Let $L$ be the extension of $L_0$ by Friedrichs, so that
$$
L_0\subset L=L^*\subset L^*_0,\quad L\geqslant\gamma\,\mathbb I,
\quad {\rm Ran\,} L={\mathscr H}
$$
holds. Its inverse $L^{-1}$ is a bounded self-adjoint operator in
${\mathscr H}$.
\smallskip

\noindent$\bullet$\,\,\, The operators
$$
\Gamma_1:=L^{-1}{L_0^*}-\mathbb I,\quad\Gamma_2:=P{L_0^*};\qquad
{\rm Dom\,}\Gamma_{1,2}={\rm Dom\,}{L_0^*},\quad{\rm
Ran\,}\Gamma_{1,2}=\mathscr K
$$
are called {\it boundary operators}. The collection $\{{\mathscr
K};\Gamma_1,\Gamma_2\}$ constitutes the (canonical by M.Vishik)
{\it boundary triple} of the operator $L_0$. The relation (Green's
formula)
\begin{equation*}
({L_0^*} u,v)-(u,{L_0^*}
v)=(\Gamma_1u,\Gamma_2v)-(\Gamma_2u,\Gamma_1v),\qquad u,v\in{\rm
Dom\,}{L_0^*}
\end{equation*}
is valid (see, e.g., \cite{Vishik,BD_DSBC,MMM}).
\subsubsection*{Dynamics governed by $L_0$}

\noindent$\bullet$\,\,\,The boundary triple, in turn, determines a
{\it dynamical system with boundary control} (DSBC) of the form
\begin{align}
\label{Eq 1}& {\ddot u}+L_0^*u = 0  && {\rm in}\,\,\,{{\mathscr H}}, \,\,\,t\in(0,T);\\
\label{Eq 2}& u|_{t=0}={\dot u}|_{t=0}=0 && {\rm in}\,\,\,{{\mathscr H}};\\
\label{Eq 3}& \Gamma_1 u = f && {\rm in}\,\,\,{{\mathscr
K}},\,\,\,t\in[0,T],
\end{align}
where $\dot{(\,\,)}:=\frac{d}{dt}$; $f=f(t)$ is a ${\mathscr
K}$-valued function of time ({\it boundary control}). By
$u=u^f(t)$ we denote the solution ({\it state at the moment $t$}).

For controls $f$ of the class
$$
{\mathscr M}\,:=\,\{f\in C^\infty([0,T];{\mathscr K})\,|\,\,{\rm
supp\,}f\subset(0,T]\}
$$
of smooth controls vanishing near $t=0$, system (\ref{Eq
1})--(\ref{Eq 3}) does have a unique classical solution
$u^f(t)\in{\rm Dom\,}{L_0^*},\,\,\,t\geqslant 0$ \cite{BD_DSBC}.
Note that ${\mathscr M}$ is dense in $L_2([0,T];\mathscr K)$.
\smallskip

\noindent$\bullet$\,\,\,For short, we call (\ref{Eq 1})--(\ref{Eq
3}) just system $\alpha^T$; the ${\mathscr H}$ - valued function
$u^f(\,\cdot\,)$ is its {\it trajectory}. The following is the
system theory attributes and basic properties of $\alpha^T$.
\smallskip

\noindent{\bf $\ast$}\,\,\,The space of controls (inputs)
$\mathscr F:=L_2([0,T];\mathscr K)$ is an {\it outer space}. It
contains the smooth dense class $\mathscr M$ and the (bordered and
continuous) nest of delayed controls
$$
\mathfrak f=\{{\mathscr F}_s\}_{s\in[0,T]}:\quad\mathscr
F_s:=\{f\in{\mathscr F}\,|\,\,{\rm supp\,}f\subset[T-s,T]\},
$$
where $T-s$ is the delay and $s$ is the action time.
\smallskip

\noindent{\bf $\ast$}\,\,\,The space $\mathscr H$ is an {\it inner
space}, the states $u^f(t)$ are its elements. It contains the {\it
reachable sets} $\mathscr {\mathscr U}_s:=\{u^f(T)\in{\mathscr
H}\,|\,\,f\in{\mathscr F}_s\cap\mathscr M\},\,\,\,s\in[0,T]$, and
the nest of the subspaces
$$
\mathfrak h\,=\,\{{\mathscr H}_s\}_{s\in[0,T]}:\quad{\mathscr
H}_s:=\overline{{\mathscr U}_s}.
$$
\noindent{\bf $\ast$}\,\,\,In the system $\alpha^T$, the {\it
input-state map} is realized by a {\it control operator}
$W:\mathscr F\to {\mathscr H}$,
$$
W f:=u^f(T)
$$
well defined on ${\mathscr M}$. It may be unbounded, but is always
closable \cite{B DSBC_3}. By the latter, we assume operator $W$
closed on a proper ${\rm Dom\,}W\subset{\mathscr F}$. It relates
the outer and inner nests: $\mathfrak h=W\mathfrak f$ holds in the
sense that $\overline{W[{\mathscr F}_s\cap{\rm Dom\,}W]}=
{\mathscr H}_s,\,\,s\in[0,T]$ is valid.
\smallskip

\noindent{\bf $\ast$}\,\,\,Since $W$ is closed, the {\it
connecting operator} $C:={W}^*W\geqslant\mathbb O_{\mathscr F}$ is
well defined on a proper ${\rm Dom\,} C\subset{\mathscr F}$
\cite{BirSol}. It connects the outer and inner space metrics:
$$
(Cf,g)_{{\mathscr F}}=(Wf,Wg)_{\mathscr
H}=(u^f(T),u^g(T))_{\mathscr H}
$$
holds. The polar decomposition of the control operator is
$W=\Phi_W|W|=\Phi_W\sqrt{C}$.
\smallskip

\noindent{\bf $\ast$}\,\,\,The {\it input-output map} is
$R:{\mathscr F}\to{\mathscr F}$, $(Rf)(t):=\Gamma_2
u^f(t),\,\,t\in[0,T]$ which is originally defined on ${\mathscr
M}$ and closable..
\smallskip

\noindent{\bf $\ast$}\,\,\,Since the operator $L_0^*$ that governs
the evolution of $\alpha^T$, does not depend on time, the
relations
\begin{equation}\label{Eq steady state 1}
u^{\ddot f}(\cdot)\,=\,\ddot u^f(\cdot)\,\overset{(\ref{Eq
1})}=\,-\,L_0^*u^f(\cdot)
\end{equation}
hold, which is referred to as a {\it steady-state property} of the
system $\alpha^T$. The evident equality $\ddot {\mathscr
M}={\mathscr M}$, along with this property, follow to
\begin{align}
\notag & L_0^*{\mathscr U}_T={\mathscr U}_T; \qquad {\rm
graph\,}{L_0^{*\,T}}:=\{u^f(T),L_0^*u^f(T)\,|\,\,u^f(T)\in{\mathscr U}_T\}=\\
\label{Eq Graph}&\overset{(\ref{Eq steady state 1})}=
\{Wf,-\,W\ddot f\,|\,\,f\in{\mathscr M}\},
\end{align}
where ${L_0^{*\,T}}:=L_0^*\upharpoonright{\mathscr U}_T$ is the
part of $L_0^*$ in ${\mathscr H}_T$
\smallskip

\noindent$\bullet$\,\,\, Let us emphasize that all the above
introduced objects and properties are fully determined by the
operator $L_0$, so that the family of systems
$\{\alpha^T\}_{T\geqslant 0}$ is an attribute of $L_0$. Moreover,
this attribute is its unitary invariant in the following sense. If
$U:{\mathscr H}\to\tilde {\mathscr H}$ is unitary then the
operator $\tilde L_0=UL_0U^*$ determines the corresponding systems
$\tilde\alpha^T$ (with the same outer spaces ${\mathscr F}$)
evolving in $\tilde{\mathscr H}$, whose elements are related with
the ones of $\alpha^T$ via $U$. For instance, the trajectories are
related by $\tilde u^f(t)=Uu^f(t),\,\,\,t\geqslant 0$.

The aforesaid makes the notation $\alpha^T_{L_0}$ more reasonable.

\subsubsection*{Models}

\noindent$\bullet$\,\,\, The following definition describes a
class of symmetric semi-bounded operators that is available for
constructing models via triangular factorization. Note in advance
that it is more convenient to construct the models of $L_0^*$ but
not of $L_0$.
\begin{Definition}\label{Def 1}
Operator $L_0$ belongs to the class $\mathfrak L$ if for any $T>0$
the control operator $W$ of the system $\alpha^T_{L_0}$ is bounded
and injective, and its connecting operator $C$ admits the
factorization (\ref{Eq Th basic 2}) in ${\mathscr F}$ along the
nest $\mathfrak f=\{{\mathscr F}_s\}_{s\in[0,T]}$.
\end{Definition}
Let $L_0\in\mathfrak L$. For a fixed $T>0$, let $C=V^*V$ be the
factorization (\ref{Eq Th basic 2}). Introduce the operator
${\tilde L_0^{*\,T}}:{\mathscr F}\to{\mathscr F},\,\,{\rm
Dom\,}{\tilde L_0^{*\,T}}={\mathscr M}$ via its graph
\begin{align}
\label{Eq tilde Graph}{\rm graph\,}{\tilde
L_0^{*\,T}}\,:=\,\{Vf,-\,V\ddot f\,|\,\,f\in{\mathscr M}\}.
\end{align}
\begin{Lemma}\label{L last}
The operators ${L_0^{*\,T}}$ and ${\tilde L_0^{*\,T}}$ are
unitarily equivalent.
\end{Lemma}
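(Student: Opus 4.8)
The plan is to build the unitary directly out of the common modulus of $W$ and $V$. The factorization furnishes $W^*W=C=V^*V$ (equation (\ref{Eq W*W=V*V})), so that
\[
\|Wf\|_{\mathscr H}^2=(Cf,f)_{\mathscr F}=\|Vf\|_{\mathscr F}^2,\qquad f\in{\mathscr M}.
\]
This single identity is the engine of the whole argument: it says the correspondence $Wf\leftrightarrow Vf$ is isometric, which is exactly what is needed since in Definition \ref{Def 1} the control operator $W$ is assumed bounded and injective, and $V=\Phi_*\sqrt{C}$ is bounded as well.

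First I would define $U_0$ on the dense domain $\mathscr U_T=W{\mathscr M}\subset{\mathscr H}_T$ by $U_0(Wf):=Vf$. Both well-definedness and injectivity follow from the norm identity: if $Wf=Wg$ then $\|V(f-g)\|=\|W(f-g)\|=0$, whence $Vf=Vg$, and symmetrically. The same identity makes $U_0$ isometric. Since $\overline{W{\mathscr M}}={\mathscr H}_T$ by the definition of the inner nest $\mathfrak h=W\mathfrak f$, the map $U_0$ extends by continuity to an isometry $U$ of ${\mathscr H}_T$ onto $\overline{V{\mathscr M}}$; being a surjective isometry onto its range, $U$ is unitary.

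Next I would verify that $U$ carries the graph (\ref{Eq Graph}) of $L_0^{*\,T}$ onto the graph (\ref{Eq tilde Graph}) of $\tilde L_0^{*\,T}$. Both domains are indexed by $f\in{\mathscr M}$, and since $\ddot{\mathscr M}={\mathscr M}$ the class ${\mathscr M}$ is stable under $f\mapsto\ddot f$, so $U(W\ddot f)=V\ddot f$. Using $L_0^{*\,T}(Wf)=-W\ddot f$ (the steady-state identity (\ref{Eq steady state 1}) that produces (\ref{Eq Graph})) one computes
\[
UL_0^{*\,T}(Wf)=U(-W\ddot f)=-V\ddot f=\tilde L_0^{*\,T}(Vf)=\tilde L_0^{*\,T}U(Wf),
\]
so that $UL_0^{*\,T}U^*=\tilde L_0^{*\,T}$ on $U\,\mathscr U_T=V{\mathscr M}={\rm Dom\,}\tilde L_0^{*\,T}$. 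This is precisely unitary equivalence.

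The remaining points are bookkeeping rather than depth. I must confirm that $\tilde L_0^{*\,T}$ is genuinely single-valued on the graph (\ref{Eq tilde Graph}): this holds because $V$ is injective ($V^*V=C>\mathbb O$), so $Vf=Vg$ forces $f=g$ and hence $V\ddot f=V\ddot g$. I must also confirm that $U$ maps domain bijectively onto domain, which is immediate from $U(Wf)=Vf$ together with the injectivity of $W$ and $V$ on ${\mathscr M}$. The only genuine subtlety is that $L_0^{*\,T}$ and $\tilde L_0^{*\,T}$ are unbounded, so the intertwining must be read at the level of graphs and domains, as above, rather than as an operator identity on the whole space; once the domains are matched through $U$, the displayed computation is forced and completes the proof.
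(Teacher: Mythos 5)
Your proof is correct and follows essentially the same route as the paper: both arguments rest on $W^*W=C=V^*V$ producing a unitary $U$ with $W=UV$ on the relevant subspaces, which then transports the graph (\ref{Eq Graph}) onto the graph (\ref{Eq tilde Graph}). The only difference is cosmetic — the paper obtains $U=\Phi_W\Phi_V^*$ from the polar decompositions $W=\Phi_W\sqrt{C}$, $V=\Phi_V\sqrt{C}$, whereas you build the same unitary by extending the isometry $Wf\mapsto Vf$ from the dense set $W\mathscr{M}$; your version is slightly more explicit about well-definedness, the single-valuedness of the graph-defined operator $\tilde L_0^{*\,T}$, and the precise range $\overline{V\mathscr{M}}$ of $U$.
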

\begin{proof} Let $W=\Phi_W|W|=\Phi_W{\sqrt{C}}$ and
$V=\Phi_V|V|=\Phi_V{\sqrt{C}}$ be the polar decompositions. Since
$C$ is injective, the injectivity holds for $W$ and $V$ and their
modules, whereas $\Phi_W:{\mathscr F}\to{\mathscr H}_T$ and
$\Phi_V:{\mathscr F}\to{\mathscr F}$ are the unitary operators.
So, $W=UV$ holds with a unitary $U=\Phi_W\Phi_V^*$. The latter,
along with (\ref{Eq Graph}), leads to ${\tilde
L_0^{*\,T}}=U^*{L_0^{*\,T}} U$.
\end{proof}

As a result, we get an operator ${\tilde L_0^{*\,T}}$, which is a
unitary copy ({\it model}) of the part ${L_0^{*\,T}}$ of the
operator $L_0^*$ acting in the inner space ${\mathscr H}$ (in its
subspace ${\mathscr H}_T$). The copy acts in the (functional)
outer space ${\mathscr F}=L_2([0,T];\mathscr K)$ and, thus,
provides a {\it functional model} of ${L_0^{*\,T}}$.
\smallskip

\noindent$\bullet$\,\,\, Increasing $T\to\infty$, we extend the
parts ${L_0^{*\,T}}$ in the inner space and construct more and
more informative models ${\tilde L_0^{*\,T}}$ in the outer space.
As a result, we obtain a unitary copy of the part $L_{0\,{\mathscr
U}}^*:=L_0^*\upharpoonright{\mathscr U}$ of the operator $L^*_0$
on ${\mathscr U}:={\rm span\,}\{{\mathscr
U}_T\,|\,\,T>0\}\subset{\mathscr H}$. Its copy $\tilde
L_{0\,{\mathscr U}}^*$ acts in ${\mathscr
F}:=L_2([0,\infty);\mathscr K)$ and is well defined on the class
$C^\infty_0((0,\infty);\mathscr K)$ of the smooth compactly
supported controls $f$ vanishing near $t=0$.

A separate question is to what extend does the model $\tilde
L_{0\,{\mathscr U}}^*$ represent the original operator $L_0^*$?
For a `good' answer, one has to impose more conditions on the
class $\mathfrak L$. In particular, to provide $\overline
{\mathscr U}={\mathscr H}$ (i.e., $L_{0\,{\mathscr U}}^*$ to be
densely defined in ${\mathscr H}$) it is necessary and sufficient
for $L_0$ to be a completely non-self-adjoint operator
\cite{BD_DSBC}. Also more restrictions may be required. As
example, the following result is established in \cite{BSim_2023}.
\begin{Theorem}\label{BSim}
Let $L_0\in\mathfrak L$ be a completely non-self-adjoint operator
in a Hilbert space ${\mathscr H}$ provided $n_{L_0}^\pm=1$. Assume
that, for any $T>0$, the connecting operator of the system
$\alpha^T_{L_0}$ is of the form $C=\mathbb I+K$, where $K$ is an
integral operator with a smooth kernel. Then $L_0$ is unitarily
equivalent to the minimal Schroedinger operator
$S_0=-\frac{d^2}{dt^2}+ q$ with a potential $q=q(t)$ in the space
$L_2[0,\infty)$.
\end{Theorem}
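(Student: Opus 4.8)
The plan is to realize the abstract operator $L_0$ through the functional model furnished by the triangular factorization and then to recognize that model as a Schrödinger operator by means of the classical theory of transformation operators.

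\textbf{Reduction to a scalar problem.} First I would exploit the hypothesis $n_{L_0}^\pm=1$, which gives $\dim{\mathscr K}=1$ and hence ${\mathscr K}\cong\mathbb C$; the outer space then collapses to the scalar space ${\mathscr F}=L_2[0,T]$ and, upon letting $T\to\infty$, to $L_2[0,\infty)$. Under this identification the nest of delayed controls $\mathfrak f$ becomes the Volterra nest, with $X_s$ acting as multiplication by the indicator of $[T-s,T]$. Complete non-self-adjointness of $L_0$ guarantees $\overline{\mathscr U}={\mathscr H}$, so that the models ${\tilde L_0^{*\,T}}$ of Lemma \ref{L last} assemble, as $T\to\infty$, into a faithful unitary copy $\tilde L_{0\,{\mathscr U}}^*$ of all of $L_0^*$. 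By the graph definition (\ref{Eq tilde Graph}), this copy sends $Vf$ to $-V\ddot f$, i.e.\ it acts as $-V\frac{d^2}{dt^2}V^{-1}$ on its domain, so the whole task reduces to identifying this conjugated second-derivative operator.

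\textbf{Structure of the triangular factor.} Next I would pin down the precise form of $V$. Since $C=\mathbb I+K$ with $K$ a smooth-kernel integral operator, Corollary \ref{C3} (together with Lemma \ref{L two nests} applied to the compact part of ${\sqrt C}$) yields ${\sqrt C}=\mathbb I+K'$ with $K'$ again of smooth-kernel integral type, and Theorem \ref{T basic} supplies the canonical factor $V=\Phi_*{\sqrt C}$. By Lemma \ref{L V=D^*W triang} this $V$ is triangular with respect to $\mathfrak f$; matching the orientation of the delayed-control nest, in scalar coordinates it is therefore a causal Volterra operator of transformation type,
\[
(Vf)(t)=f(t)+\int_0^t k(t,s)\,f(s)\,ds ,
\]
with a smooth kernel $k$ inherited from the smoothness of $K$.

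\textbf{Emergence of the potential (the main obstacle).} The core of the argument is to show that conjugating the free operator by such a $V$ produces a Schrödinger operator, namely $-V\frac{d^2}{dt^2}V^{-1}=-\frac{d^2}{dt^2}+q$, which is equivalent to the intertwining $[\frac{d^2}{dt^2},V]=qV$. I would verify this by computing the commutator on smooth controls: integration by parts transfers both derivatives onto the kernel and leaves boundary contributions concentrated on the diagonal $s=t$. Collecting them produces the candidate potential $q(t)=2\frac{d}{dt}k(t,t)$, while the requirement that $\tilde L_{0\,{\mathscr U}}^*$ be a genuine (local) differential operator forces the residual nonlocal term to vanish, which is precisely the statement that $k$ solves the Goursat problem $k_{tt}-k_{ss}=q(t)\,k$ with the appropriate boundary condition. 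I expect this to be the hard part: one must justify the integrations by parts and, above all, deduce from the steady-state structure encoded in (\ref{Eq Graph})--(\ref{Eq tilde Graph}) that the nonlocal remainder cancels and that $k$ is regular enough on the diagonal for $q$ to be a well-defined function.

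\textbf{Identification of the minimal operator.} Finally I would match domains and defect data. The controls in ${\mathscr M}$ vanish near $t=0$, so the model is constructed on $C_0^\infty((0,\infty);\mathbb C)$; its closure is exactly the \emph{minimal} Schrödinger operator $S_0=-\frac{d^2}{dt^2}+q$ on $L_2[0,\infty)$, whose deficiency indices $n_\pm=1$ coincide with those of $L_0$. Since $\tilde L_{0\,{\mathscr U}}^*$ is a faithful unitary copy of $L_0^*$ and equals $S_0^*$, passing back through the adjoints shows that $L_0$ and $S_0$ are unitarily equivalent, which is the assertion of the theorem.
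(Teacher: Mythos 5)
The paper does not actually prove this theorem: it is quoted from the reference \cite{BSim_2023}, and the text only remarks that the model Schroedinger operator is realized as $S_0=({\tilde L_{0\,{\mathscr U}}^*})^*$ and that $q$ is recovered locally. Your outline is consistent with that intended route (scalar reduction for $n_\pm=1$, triangular factorization of $C=\mathbb I+K$ giving a Volterra transformation operator $V=\mathbb I+k_+$, identification of $-V\frac{d^2}{dt^2}V^{-1}$ as $-\frac{d^2}{dt^2}+q$), so the architecture is right.

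However, there is a genuine gap at exactly the step you flag as ``the main obstacle,'' and your way around it is circular. For a \emph{generic} positive $C=\mathbb I+K$ with smooth kernel, the canonical triangular factor $V=\mathbb I+k_+$ exists (Corollary \ref{C3}), but the conjugated operator $-V\frac{d^2}{dt^2}V^{-1}$ is \emph{not} a local differential operator: the commutator $[\frac{d^2}{dt^2},V]$ leaves a nonlocal Volterra remainder unless $k$ happens to solve the Goursat problem $k_{tt}-k_{ss}=q(t)k$. You dispose of this remainder by invoking ``the requirement that $\tilde L_{0\,{\mathscr U}}^*$ be a genuine (local) differential operator,'' but locality is the conclusion of the theorem, not a hypothesis; nothing in the definition (\ref{Eq tilde Graph}) forces it. The missing ingredient is the dynamical origin of $C$: one must use that $C=W^*W$ is the connecting operator of the system $\alpha^T_{L_0}$ --- in particular the steady-state property (\ref{Eq steady state 1}) and the resulting evolution identity satisfied by the kernel of $C$ (a wave-type equation $c_{tt}-c_{ss}=0$ off the diagonal, in BC-method terms) --- and transfer that identity through the factorization $C=V^*V$ to derive the Goursat problem for $k$. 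As written, your argument would ``prove'' the theorem for any smooth perturbation of the identity, which is false. Two smaller points also need attention: the unitaries $U=\Phi_W\Phi_V^*$ of Lemma \ref{L last} depend on $T$, and their consistency as $T\to\infty$ (needed to ``assemble'' a single unitary ${\mathscr H}\to L_2[0,\infty)$) must be checked, essentially via the local recovery of $q$; and the final passage $(\tilde L_{0\,{\mathscr U}}^*)^*=S_0$ together with $(L_{0\,{\mathscr U}}^*)^*=L_0$ requires knowing that ${\mathscr U}$ is not merely dense but a core for $L_0^*$, which complete non-self-adjointness alone does not immediately supply.
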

The minimality means that $S_0=\overline{S_0\upharpoonright
C^\infty_0(0,\infty)}$ holds. Omitting the detail, just note that
the model Schroedinger operator is realized as $S_0=({\tilde
L_{0\,{\mathscr U}}^*})^*$. Also note that the model potential is
recovered {\it locally}: the values of $q\,|_{\,0\leqslant
t\leqslant T}$ are determined by system $\alpha^T_{L_0}$ (its
connecting operator $C$), but no information about
$\alpha^{T'}_{L_0}$ with $T'>T$ is required.
\smallskip

\noindent$\bullet$\,\,\, The injectivity of $C$ in the Definition
\ref{Def 1} is not substantial because at every step of
constructing the model one can operate with the controls $f\in
{\mathscr F}\ominus{\rm Ker\,}C$.

In principle, the boundedness of $W$ can be also cancelled. It was
not assumed, when we introduced the systems $\alpha^T$. The reason
is that all constructions used in factorization (in particular, a
diagonal) can be generalized on unbounded case. It would require
more accurate handle with the operator domains of definition and
dealing with the generalized solutions to the problem (\ref{Eq
1})--(\ref{Eq 3}).

\subsubsection*{Comments}

As was noted in Introduction, the above proposed factorization
scheme reveals an operator background of the basic version of the
BC-method \cite{B UMN,B IPI}, which is an approach to the inverse
problems. Here are some comments on this approach.
\smallskip

\noindent$\bullet$\,\,\, There is a  bit of philosophy. In terms
of the system theory, inverse problems are the problems of
determination of a system (e.g., $\alpha^T_{L_0}$) from its
`input$\to$output' correspondence. This correspondence is given by
some {\it inverse data} (e.g., operator $R$), which formalize the
measurements implemented by the external observer. Given data, the
observer must recover the system: its structure, parameters, etc
(in particular, recover operator $L_0$).

However, the original system is placed in the inner space
${\mathscr H}$, which is invisible and unreachable for the
observer. In such a situation, by the very general system theory
thesis, the only relevant understanding of `to recover' is to
create an isomorphic copy of the system extracting it from the
inverse data (see \cite{KFA}, section 10.6). The copy has to be
constructed from the objects of the outer space of inputs
(controls) ${\mathscr F}$, in which the observer operates. System
$\tilde\alpha^T_{L_0}$ is such a copy.
\smallskip

\noindent$\bullet$\,\,\, A cornerstone of the BC-method is that a
wide class of the inverse data, in the time (as our $R$) or
frequency domains, does determine the connecting operator $C$ of
the system under investigation. Moreover, $C$ is expressed via the
inverse data in a simple and explicit form \cite{B UMN,B IPI}. Due
to that, the observer is able to realize the factorization
(\ref{Eq Th basic 2}) and get the factor $V$ which {\it
visualizes} the invisible states $u^f$ by $\tilde u^f=Vf=\Phi u^f$
with an isometric $\Phi$. Also, by the use of (\ref{Eq tilde
Graph}) the observer constructs the copy $\tilde L_0$ of the
original $L_0$. In accordance with the above mentioned thesis,
this is the maximum which the observer can hope for.

A remarkable fact is that, in applications, the copies $\tilde
u^f$ turn out to be very informative and close to the originals
$u^f$ that makes the term {\it visualization} quite motivated
\cite{B IPI,BSim_2023}. It is the fact, due to which the
triangular factorization is a tool for solving inverse problems.
\smallskip

\noindent$\bullet$\,\,\, Factorization is most relevant for the
`hyperbolic' systems $\alpha^T_{L_0}$, in which the states (waves)
propagate with finite velocity. This holds for important
applications in acoustics, geophysics, electrodynamics,
elasticity. Recently the factorization scheme was realized in
numerical experiment: in \cite{Pestov Film} a movie is
demonstrated, in which a scattering process into an inhomogeneous
media is visualized in the real time.
\smallskip

\noindent$\bullet$\,\,\, At heuristic level, the key construction
of the diagonal was proposed in \cite{Bel_BC&WFC} and then
rigorously justified in \cite{BKach_1994,B Obzor IP 97}.

\bigskip

\noindent{\bf Key words:}\,\,\,triangular factorization of
operators, nest theory, functional models, inverse problems.
\smallskip

\noindent{\bf MSC:}\,\,\,47Axx,\,\, 47B25,\,\, 35R30.

\end{document}